\documentclass[journal,12pt,onecolumn,draftclsnofoot,]{IEEEtran}
\usepackage{cite}
\usepackage{amsmath,amssymb,amsfonts}
\usepackage{algorithmic}
\usepackage{graphicx}
\usepackage{textcomp}
\makeatletter
\renewcommand\@noitemerr{%
  \@latex@warning{Empty `thebibliography' environment}%
}
\makeatother

\usepackage{amsthm}
\newtheorem{thm}{Theorem}
\newtheorem{rem}{Remark}%
\newtheorem{cor}{Corollary}%
\newtheorem{lem}{Lemma}%
\newtheorem{assum}{Assumption}%
\newtheorem{defn}{Definition}%

\usepackage{enumitem}

\usepackage{hyperref} 
\hypersetup{colorlinks=true,colorlinks,linkcolor={blue},citecolor={blue},urlcolor={blue}} 
  
\usepackage{tikz}
\usetikzlibrary{shapes,arrows,positioning,calc}

\begin{document}
\title{Input-to-State Stability of Reset-Integral Sliding Mode Control for Linear Systems}
\author{Xinxin Zhang, \IEEEmembership{Member, IEEE}, and Leonid Freidovich, \IEEEmembership{Senior Member, IEEE}

\thanks{Xinxin Zhang (e-mail: xinxin.zhang@umu.se) and Leonid Freidovich  (e-mail: leonid.freidovich@umu.se) are with Department of Applied Physics and Electronics, Umeå University, Håken Gullessons väg 20, Teknikhuset, 901 87 Umeå. This work is supported by the Kempe Foundation, Sweden, under project number JCSMK24-523.}
}

\maketitle 

\begin{abstract}
This work presents a stability analysis of a hybrid control system integrating a reset controller (RC) featuring a single reset state with an integral sliding-mode controller (ISMC). It is shown that the reachability of the sliding surface is decoupled from the nominal reset mechanism. This decoupling property enables a Lyapunov-based stability analysis, demonstrating that the closed-loop RC-ISMC system achieves input-to-state stability (ISS) and uniform ultimate boundedness (UUB) in the presence of exogenous references and disturbances. Furthermore, global asymptotic stability (GAS) is recovered in the unperturbed regulation scenario. A numerical case study illustrates the theoretical results.
\end{abstract}

\begin{IEEEkeywords}
Input-to-state stability, reset control, integral sliding mode control.
\end{IEEEkeywords}


\section{Introduction}
\label{sec:intro}  
Reset control systems are a class of hybrid systems that integrate linear continuous-time flow dynamics with a discrete reset mechanism modeled as state jumps \cite{Banos2011}. Well known for their ability to overcome linear control limitations, such as Bode's integral theorem, reset controllers can achieve enhanced transient or steady-state performance and have been widely applied in mechatronics \cite{wu2007reset, Clegg1958, Banos2011}. 

However, the nonlinearity of the reset mechanism can degrade robustness, leaving the closed-loop system sensitive to exogenous perturbations. This limitation motivates the integration of reset control with integral sliding-mode control (ISMC) to reject matched disturbances \cite{rubagotti2011integral}. By enforcing the sliding regime from the initial instant, ISMC eliminates the reaching phase, ensuring matched disturbance rejection from the onset of the system response \cite{Utkin1996, Shtessel2014}.

Nevertheless, it remains unclear whether integrating an ISMC layer compromises the stability of the reset controller. While ISS criteria for reset controllers \cite{beker2004fundamental, Banos2011} and for impulsive systems \cite{bainov1989systems, khalil2002nonlinear} are well established, the stability analysis of the reset control–integral sliding-mode control (RC-ISMC) system remains an open problem.

This open problem is further complicated by a practical design choice in the ISMC layer. Classical ISMC assumes ideal discontinuous switching to instantaneously reject matched disturbances \cite{Shtessel2014}; to avoid the resulting chattering, this work instead adopts a saturation-based boundary layer \cite{Slotine}. Accordingly, the stability analysis of the RC-ISMC architecture presents two primary challenges: (i) establishing finite-time reachability of the boundary layer during continuous flow and ensuring its invariance across discrete jumps, and (ii) preserving the stability of the nominal reset system under the ISMC law. To address these challenges, the main contributions of this work are as follows:
\begin{itemize}
    \item We prove that the saturation-based ISMC guarantees finite-time reachability of a boundary layer around the sliding surface.
    \item By using a quadratic ISS Lyapunov function, we establish that the closed-loop RC-ISMC system achieves ISS and uniform ultimate boundedness (UUB) in the presence of exogenous references and disturbances.
    \item A numerical case study is provided to illustrate the applicability of the theoretical findings.
\end{itemize}

The remainder of this study is organized as follows. Section \ref{sec:prel} introduces the plant model and the reset controller. Section \ref{sec:method} formulates the RC-ISMC law and presents the ISS analysis, establishing the ultimate bounds. Section \ref{sec:results} presents the numerical case study, and Section \ref{sec:conc} concludes the study with a summary of limitations and future work.

\section{Preliminaries}
\label{sec:prel}

\subsection{System Description}
Consider the linear time-invariant (LTI) plant
\begin{equation}
\label{eq:state_space_system}
\begin{cases}
    \dot{x}(t) = Ax(t) + B\bigl(u(t) + d(t)\bigr) + w(t), \\ 
    u(t) = u_{ff}(t) + u_r(t) + u_{sm}(t), \\
    y(t) = Cx(t),
\end{cases}
\end{equation}
where $x(t) \in \mathbb{R}^n$ is the measurable state, $u(t) \in \mathbb{R}$ is the control input, and $y(t) \in \mathbb{R} $ is the measured output. The matrices $A \in \mathbb{R}^{n \times n}$, $B \in \mathbb{R}^{n \times 1}$, and $C \in \mathbb{R}^{1 \times n}$ are constant, and the pair $(A, B)$ is controllable. The terms $d(t) \in \mathbb{R}$ and $w(t) \in \mathbb{R}^n$ denote the exogenous matched and unmatched disturbances, respectively. The control law $u(t)$ consists of a feedforward term $u_{ff}(t)$, a reset control law $u_r(t)$ (detailed in Section \ref{sec:nominal_reset_system}), and an ISMC law $u_{sm}(t)$ (in Section \ref{subsec: ismc}).

Let $r_s(t) \in \mathbb{R}^n$ denote the state reference trajectory, with the corresponding output reference $r(t) = C r_s(t) \in \mathbb{R}$. The state tracking error $e_s(t) \in \mathbb{R}^n$ and output tracking error $e(t) \in \mathbb{R}$ are given by
\begin{equation}
    \label{eq:output_error_def}
    e_s(t) = r_s(t) - x(t), \quad e(t) = r(t) - y(t) = C e_s(t).
\end{equation}

\begin{assum}
\label{assum:reference_feedforward_bounds}
The state reference trajectory $r_s(t) \in \mathbb{R}^n$ and the feedforward control input $u_{ff}(t) \in \mathbb{R}$, along with their first time derivatives, are bounded. There exist known constants $\bar{r}_0, \bar{r}_1, \Delta_{f}, \Delta_{f1} > 0$ such that for almost all $t \geq 0$:
\begin{equation*}
\begin{aligned}
    \|r_s(t)\| &\leq \bar{r}_0, \quad \|\dot{r}_s(t)\| \leq \bar{r}_1, \\
    |u_{ff}(t)| &\leq \Delta_{f}, \quad |\dot{u}_{ff}(t)| \leq \Delta_{f1}.
\end{aligned}
\end{equation*}
\end{assum}

\begin{assum}
\label{assum:perturbation_noise_bounds}
The unmatched disturbance $w(t) \in \mathbb{R}^n$ is bounded; there exists a known constant $\Delta_w \geq 0$ such that $\|w(t)\| \leq \Delta_w$ for all $t \geq 0$.
\end{assum}


\begin{assum}
\label{assum:disturbance_bound}
The matched disturbance $d(t) \in \mathbb{R}$ satisfies the following conditions:
\begin{enumerate}
    \item \textit{Amplitude bound:} There exists a known constant $\Delta > 0$ such that $|d(t)| \leq \Delta$ for all $t \geq 0$.
    \item \textit{Rate bound:} The signal $d(t)$ is locally absolutely continuous, and there exists a known constant $\Delta_1 > 0$ such that $|\dot{d}(t)| \leq \Delta_1$ for almost all $t \geq 0$.
\end{enumerate}
\end{assum}
 
\begin{rem}
In digital implementations governed by a finite sampling period $T_s > 0$, the rates of change of $r_s(t)$ and $d(t)$ are bounded over each sampling interval. This constraint precludes the infinite derivatives associated with ideal continuous-time steps or impulsive disturbances.
\end{rem}
 
\subsection{Reset Control System}
\label{sec:nominal_reset_system} 
The reset controller, with input $e(t)$ and output $u_r(t)$, is described by the state-space equations
\begin{equation}
\label{eq:reset_controller}
\begin{cases}
    \dot{x}_r(t) = A_R x_r(t) + B_R e(t), & \text{if } \zeta(t) \in \mathcal{F}_r, \\
    x_r(t^+)     = A_\rho x_r(t),         & \text{if } \zeta(t) \in \mathcal{J}_r, \\
    u_r(t)       = C_R x_r(t) + D_R e(t),
\end{cases}
\end{equation}
where $x_r(t) \in \mathbb{R}^{n_r}$ is the controller state. The matrices $A_R \in \mathbb{R}^{n_r \times n_r}$, $B_R \in \mathbb{R}^{n_r \times 1}$, $C_R \in \mathbb{R}^{1 \times n_r}$, and $D_R \in \mathbb{R}$ are constant. The reset matrix $A_\rho \in \mathbb{R}^{n_r \times n_r}$ is given by
\begin{equation}
\label{eq:A_rho}
    A_\rho =
    \begin{bmatrix}
        \gamma & 0_{1 \times (n_r-1)} \\
        0_{(n_r-1) \times 1} & I_{n_r-1}
    \end{bmatrix}, \quad \gamma \in (-1,1),
\end{equation}
which indicates that the reset controller used in this study features a single reset state. Let the reset decision vector $\zeta(t) \in \mathbb{R}^{n_r+1}$ be defined as $\zeta(t) = \bigl[x_r^\top(t), e^\top(t)\bigr]^\top$. The flow set $\mathcal{F}_r$ and jump set $\mathcal{J}_r$ are parameterized by a constant symmetric matrix $M = M^\top \in \mathbb{R}^{(n_r+1)\times(n_r+1)}$ such that
\begin{equation}
\label{eq:F_J_set}
\begin{aligned}
    \mathcal{F}_r &= \bigl\{ \zeta \in \mathbb{R}^{n_r+1} \mid \zeta^\top M \zeta \geq 0 \bigr\}, \\
    \mathcal{J}_r &= \bigl\{ \zeta \in \mathbb{R}^{n_r+1} \mid \zeta^\top M \zeta \leq 0 \bigr\}.
\end{aligned}
\end{equation}

\begin{defn}
\label{defn:cl_rc}
Let the state of the closed-loop reset system (with $u_{sm} \equiv 0$) be denoted by $\chi(t) = \bigl[e_s^\top(t),\; x_r^\top(t)\bigr]^\top \in \mathbb{R}^{n+n_r}$. Substituting the control law $u(t) = u_r(t)$ and tracking error $e(t) = C e_s(t)$ into the plant~\eqref{eq:state_space_system} and reset controller~\eqref{eq:reset_controller}, the hybrid dynamics of the closed-loop reset system are given by
\begin{equation}
\label{eq:chi_nominal}
\mathcal{H}_{\chi}:
\begin{cases}
    \dot{\chi}(t) = A_\chi \chi(t) + E_r\bigl(q_r(t) - w(t)\bigr) + B_c d(t),  & \text{ if } \zeta(t) \in \mathcal{F}_r, \\
    \chi(t^+) = A_{\chi,\rho} \chi(t),& \text{ if } \zeta(t) \in \mathcal{J}_r,
\end{cases}
\end{equation}
where the signal $q_r(t) \in \mathbb{R}^n$ is defined as
\begin{equation}
\label{eq:qr_def}
    q_r(t) = \dot{r}_s(t) - Ar_s(t) - Bu_{ff}(t).
\end{equation}
The system matrices are defined as
\begin{equation}
\label{eq:A_chi_def}
\begin{aligned}
    A_\chi &= 
    \begin{bmatrix} 
        A - BD_RC & -BC_R \\ 
        B_RC & A_R 
    \end{bmatrix} \in \mathbb{R}^{(n+n_r)\times(n+n_r)}, \quad
    E_r  = 
    \begin{bmatrix} 
        I_n \\ 
        0_{n_r \times n} 
    \end{bmatrix} \in \mathbb{R}^{(n+n_r)\times n}, \\ 
    B_c &= 
    \begin{bmatrix} 
        -B \\ 
        0_{n_r \times 1} 
    \end{bmatrix} \in \mathbb{R}^{(n+n_r)\times 1},\quad
    A_{\chi,\rho}  = 
    \begin{bmatrix} 
        I_n & 0_{n \times n_r} \\ 
        0_{n_r \times n} & A_\rho 
    \end{bmatrix} \in \mathbb{R}^{(n+n_r)\times(n+n_r)}.
\end{aligned}
\end{equation}
\end{defn}

\begin{rem}
By Assumption~\ref{assum:reference_feedforward_bounds}, $q_r(t)$ in \eqref{eq:qr_def} is bounded such that $\|q_r(t)\| \leq \bar{q}_r$, where $\bar{q}_r = \bar{r}_1 + \|A\|\bar{r}_0 + \|B\|\Delta_f$. From \eqref{eq:qr_def}, provided the pair $(A, B)$ satisfies the feedforward matching condition, $u_{ff}(t)$ is typically designed to enforce $q_r(t) \equiv 0$.
\end{rem}


\begin{assum} 
\label{assum:iss_reset_input}
For the closed-loop reset system~\eqref{eq:chi_nominal}, there exist a symmetric positive-definite matrix $P \in \mathbb{R}^{(n+n_r) \times (n+n_r)}$ and a constant $\lambda_0 > 0$ such that the continuous flow condition
\begin{equation}
\label{eq:tracking_flow_lmi}
    \chi^\top \bigl(A_\chi^\top P + P A_\chi\bigr) \chi \leq -\lambda_0 \chi^\top P \chi, \quad \forall\, \zeta(t) \in \mathcal{F}_r,
\end{equation}
and the discrete jump condition
\begin{equation}
\label{eq:tracking_jump_lmi}
    \chi^\top \bigl(A_{\chi,\rho}^\top P A_{\chi,\rho} - P\bigr) \chi \leq 0, \quad \forall\, \zeta(t) \in \mathcal{J}_r,
\end{equation}
are satisfied.
\end{assum}

Assumption~\ref{assum:iss_reset_input} guarantees the ISS property of the closed-loop reset system~\eqref{eq:chi_nominal}, as formalized in Corollary~\ref{cor:quadratic_iss_reset_sufficient}.

\begin{cor} 
\label{cor:quadratic_iss_reset_sufficient}
Under Assumption \ref{assum:iss_reset_input}, there exist a symmetric positive-definite matrix $P = P^\top > 0$, a decay rate $\lambda=\lambda_0/4 > 0$, and gains $\sigma_r, \sigma_w, \sigma_d \geq 0$ given by
\begin{equation}
\label{eq:iss_constants}
    \sigma_r = \sigma_w = \frac{4}{\lambda_0}\bigl\|E_r^\top P E_r\bigr\|, \quad 
    \sigma_d = \frac{4}{\lambda_0}\bigl\|B_c^\top P B_c\bigr\|,
\end{equation}
such that the Lyapunov function $V_e(\chi) = \chi^\top P \chi$ satisfies the flow condition
\begin{equation}
\label{eq:nominal_iss_flow}
    \dot{V}_e \leq -\lambda V_e + \sigma_r \|q_r\|^2 + \sigma_w \|w\|^2 + \sigma_d \|d\|^2, \quad \forall\, \zeta(t) \in \mathcal{F}_r,
\end{equation}
and the jump condition
\begin{equation}
\label{eq:nominal_iss_jump}
    V_e(\chi^+) \leq V_e(\chi), \quad \forall\, \zeta(t) \in \mathcal{J}_r.
\end{equation}
\end{cor}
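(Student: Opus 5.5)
We take the matrix $P$ supplied by Assumption~\ref{assum:iss_reset_input} and use $V_e(\chi)=\chi^\top P\chi$ as the candidate. The jump condition \eqref{eq:nominal_iss_jump} is immediate: for $\zeta\in\mathcal{J}_r$, $\chi^+=A_{\chi,\rho}\chi$, so $V_e(\chi^+)-V_e(\chi)=\chi^\top(A_{\chi,\rho}^\top PA_{\chi,\rho}-P)\chi\le 0$ by \eqref{eq:tracking_jump_lmi}. All the work is therefore in the flow inequality.

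Along flows, write $v:=q_r-w$. Differentiating \eqref{eq:chi_nominal} and using \eqref{eq:tracking_flow_lmi} gives
\begin{equation*}
\dot V_e=\chi^\top(A_\chi^\top P+PA_\chi)\chi+2\chi^\top PE_r v+2\chi^\top PB_c d
\le -\lambda_0 V_e+2\chi^\top PE_r v+2\chi^\top PB_c d .
\end{equation*}
We handle the two cross terms with a weighted Young inequality in the $P$-inner product. Since $P^{1/2}$ exists, $2\chi^\top PE_r v=2(P^{1/2}\chi)^\top(P^{1/2}E_r v)\le \epsilon V_e+\epsilon^{-1}v^\top E_r^\top PE_r v\le \epsilon V_e+\epsilon^{-1}\|E_r^\top PE_r\|\,\|v\|^2$. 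The $d$-term is treated the same way, giving $\epsilon V_e+\epsilon^{-1}\|B_c^\top PB_c\|\,|d|^2$. Finally, $\|v\|^2\le 2\|q_r\|^2+2\|w\|^2$.

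The remaining step is to choose $\epsilon$ so that the constants match \eqref{eq:iss_constants}. With $\epsilon=\lambda_0/4$ for both cross terms, the decay becomes $-\lambda_0 V_e+\tfrac{\lambda_0}{2}V_e=-\tfrac{\lambda_0}{2}V_e\le -\tfrac{\lambda_0}{4}V_e$, so $\lambda=\lambda_0/4$ holds with margin. The $d$ gain is then $\tfrac{4}{\lambda_0}\|B_c^\top PB_c\|=\sigma_d$, as required.

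The $v$ gain needs care, because the factor $2$ from $\|v\|^2\le 2\|q_r\|^2+2\|w\|^2$ would give $\tfrac{8}{\lambda_0}$ instead of $\tfrac{4}{\lambda_0}$. To absorb it, we split $v$ into separate cross terms $2\chi^\top PE_r q_r$ and $-2\chi^\top PE_r w$, each with $\epsilon=\lambda_0/4$. The $\chi$-side contributions then total $3\lambda_0/4$, leaving decay $-\lambda_0/4$, which is exactly why $\lambda=\lambda_0/4$. The input gains come out as $\tfrac{4}{\lambda_0}\|E_r^\top PE_r\|$ for both $q_r$ and $w$, and $\tfrac{4}{\lambda_0}\|B_c^\top PB_c\|$ for $d$.

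This three-way split is the step we expect to be the main obstacle. It is where the specific constants come from, and the bookkeeping must keep the total $\chi$-side budget within $\lambda_0-\lambda$. Combining the three terms yields \eqref{eq:nominal_iss_flow} for all $\zeta\in\mathcal{F}_r$, which completes the argument.
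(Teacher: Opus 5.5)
Your proposal is correct and matches the paper's proof: the jump inequality comes directly from \eqref{eq:tracking_jump_lmi}, and the flow inequality follows from the $P^{1/2}$-weighted Young inequality with $\mu=\lambda_0/4$. As in the paper, you apply it separately to the three cross terms in $q_r$, $w$, and $d$, which uses $3\lambda_0/4$ of the decay and leaves $\lambda=\lambda_0/4$ with exactly the gains in \eqref{eq:iss_constants}. (Your grouped $v=q_r-w$ route would also have worked if you had taken weight $\lambda_0/2$ on that term, but the three-way split is what the paper uses.)
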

\begin{proof}
    Proof is provided in Appendix \ref{appendix:proof_quadratic_iss_reset}.
\end{proof}

\begin{rem}
\label{rem:assum_iss_meaning}
Corollary~\ref{cor:quadratic_iss_reset_sufficient} establishes the ISS \cite{Sontag,Hespanha} of system~\eqref{eq:chi_nominal}. The flow inequality~\eqref{eq:nominal_iss_flow} guarantees that the continuous dynamics of the Lyapunov candidate $V_e(\chi)$ are dissipative, subject to bounded contributions from the feedforward residual $q_r$, the unmatched process perturbation $w$, and the matched disturbance $d$ via their respective gains \cite{SontagWang}. The jump inequality~\eqref{eq:nominal_iss_jump} enforces the non-expansion of the Lyapunov function across all reset instants \cite[Theorem 13.5]{bainov1989systems}. In the unforced and disturbance-free scenario ($r_s \equiv 0$, $u_{ff} \equiv 0$, $w \equiv 0$, and $d \equiv 0$), Corollary~\ref{cor:quadratic_iss_reset_sufficient} guarantees that $V_e(\chi)$ decays at a minimum exponential rate $\lambda$, rendering system~\eqref{eq:chi_nominal} globally exponentially stable.
\end{rem}

\begin{rem}
\label{rem:lmi_verification}
The $\mathcal{H}_\beta$-condition~\cite{beker2004fundamental} provides a sufficient criterion to verify Assumption~\ref{assum:iss_reset_input} when $\gamma=0$. Let the controller state be partitioned into $n_\rho=1$ reset and $n_{\bar\rho}=n_r-n_\rho$ non-reset dimensions. The condition requires the transfer function $H_\beta(s) = C_\beta(sI - A_\chi)^{-1}B_\rho$, where
\begin{equation}
    C_\beta = \begin{bmatrix} \beta C & P_\rho & 0_{n_\rho \times n_{\bar{\rho}}} \end{bmatrix}, \ 
    B_\rho = \begin{bmatrix} 0_{n_\rho \times n} & I_{n_\rho} & 0_{n_\rho \times n_{\bar{\rho}}} \end{bmatrix}^\top\!,
\end{equation}
to be strictly positive real (SPR) for some vector $\beta\in\mathbb{R}^{n_\rho}$ and symmetric matrix $P_\rho>0$.

If $H_\beta(s)$ is SPR, the Kalman--Yakubovich--Popov (KYP) lemma~\cite[Lemma~6.3]{khalil2002nonlinear} guarantees a matrix $P=P^\top>0$ satisfying the condition $A_\chi^\top P+PA_\chi<0$ (yielding a valid $\lambda_0>0$ for~\eqref{eq:tracking_flow_lmi}) and $PB_\rho=C_\beta^\top$. Partitioning the closed-loop state $\chi=(e_s,\, x_{r,\rho},\, x_{r,\bar\rho})$ and defining the matrix $P$ by conformable blocks $P_{ij}$ (for $i,j \in \{e, \rho, \bar\rho\}$), the matrix product $PB_\rho$ evaluates to the second block-column of $P$. Thus, $PB_\rho = C_\beta^\top$ yields $P_{e\rho}=C^\top\beta^\top$, $P_{\rho\rho}=P_\rho$, and $P_{\rho\bar\rho}^\top = 0$. By setting $\beta=0$, we additionally force $P_{e\rho}=0$. Under standard synthesis (where $P_{e\bar\rho}=0$), this yields a fully block-diagonal matrix $P=\operatorname{diag}(P_{ee},\, P_\rho,\, P_{\bar\rho\bar\rho})$, giving $A_{\chi,\rho}^\top PA_{\chi,\rho}-P\le0$ and satisfying~\eqref{eq:tracking_jump_lmi}.

This condition remains valid for all $\gamma\in(-1,1)$. First, $A_\chi$ is $\gamma$-independent, so~\eqref{eq:tracking_flow_lmi} holds unchanged. Second, evaluating the jump matrix with $A_{\chi,\rho}=\operatorname{diag}(I_n,\,\gamma I_{n_\rho},\,I_{n_{\bar\rho}})$ yields
\begin{equation}
    A_{\chi,\rho}^\top P A_{\chi,\rho} - P = \operatorname{diag}\bigl(0_{n \times n},\; (\gamma^2-1)P_\rho,\; 0_{n_{\bar\rho} \times n_{\bar\rho}}\bigr) \le 0,
\end{equation}
which holds because $P_\rho>0$ and $\gamma^2\le1$, preserving~\eqref{eq:tracking_jump_lmi}.
\end{rem}
Remark \ref{rem:lmi_verification} provides a sufficient condition to satisfy Assumption~\ref{assum:iss_reset_input}; alternatively, a valid matrix $P$ can also be computed via techniques such as convex optimization.

\section{Input-to-State Stability Analysis of the RC-ISMC System}
\label{sec:method}
This section formulates the closed-loop RC-ISMC system and employs a Lyapunov analysis to prove its ISS and UUB against references and disturbances.

\subsection{Integral Sliding Mode Control}
\label{subsec: ismc}
Define the saturation function as
\begin{equation}
\label{eq:sat_def}
    \operatorname{sat}_{\delta}(s)=
    \begin{cases}
       \mathrm{sgn}(s), & \text{if } |s|>\delta, \\
        s/\delta, & \text{if } |s|\leq\delta,
    \end{cases}
\end{equation}
where $\delta>0$ is the boundary-layer thickness and $\operatorname{sgn}(s)=1$ for $s>0$, $\operatorname{sgn}(s)=-1$ for $s<0$.

The ISMC law, denoted by $\mathcal{C}_s$, is defined by
\begin{equation}
\label{eq:Cm1}
    \mathcal{C}_s:
    \begin{cases}
        u_{sm}(t) = G^{-1}\rho\,\operatorname{sat}_{\delta}(s(t)), \\[4pt]
        s(t) = C_m e_s(t) + z(t), \\[4pt]
        \dot{z}(t) = -C_m\bigl[\dot{r}_s(t) - Ax(t) - B\bigl(u_r(t) + u_{ff}(t)\bigr)\bigr],
    \end{cases}
\end{equation}
where $C_m \in \mathbb{R}^{1 \times n}$ and $G = C_mB \neq 0$. The gain $\rho$ is selected such that:
\begin{equation}
\label{eq:rho_gain_condition}
    \rho = |G|\Delta + \|C_m\|\Delta_w + \eta, \quad \eta > 0.
\end{equation}

\subsection{Closed-Loop RC-ISMC System}
\label{sec:closed_loop_def}
Lemma \ref{lem:decoupling} establishes how the sliding variable $s(t)$ in \eqref{eq:Cm1} evolves during flows and across resets. 
\begin{lem} 
\label{lem:decoupling}
Consider the closed-loop RC-ISMC system comprising the plant~\eqref{eq:state_space_system}, the reset controller~\eqref{eq:reset_controller}, and the ISMC law~\eqref{eq:Cm1}. Suppose Assumptions~\ref{assum:reference_feedforward_bounds}--\ref{assum:iss_reset_input} hold. The sliding variable $s(t)$ satisfies the flow dynamics
\begin{equation}
\label{eq:sdot_autonomous}
\begin{aligned}
    \dot{s}(t) &= -G\bigl(u_{sm}(t)+d(t)\bigr) - C_m w(t) \\
               &= -G d(t) - C_m w(t) - \rho\operatorname{sat}_{\delta}\bigl(s(t)\bigr), \ \forall\, \zeta(t) \in \mathcal{F}_r,
\end{aligned}
\end{equation}
and is invariant across discrete resets, satisfying the jump condition
\begin{equation}
\label{eq:s_jump}
    s(t^+) = s(t), \quad \forall\, \zeta(t) \in \mathcal{J}_r.
\end{equation}
\end{lem}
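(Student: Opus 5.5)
Both claims follow by direct computation: differentiate $s=C_m e_s+z$ along flows using the plant and the definition of $\dot z$, and check that the reset map leaves every ingredient of $s$ unchanged.

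\textbf{Flow dynamics.} On any flow interval ($\zeta(t)\in\mathcal{F}_r$) the signals $r_s$, $x$ and $z$ are locally absolutely continuous, so for almost all $t$
\[
\dot s(t)=C_m\dot e_s(t)+\dot z(t)=C_m\bigl(\dot r_s(t)-\dot x(t)\bigr)+\dot z(t).
\]
Substituting the plant~\eqref{eq:state_space_system} with $u=u_{ff}+u_r+u_{sm}$ gives
\[
C_m\dot e_s=C_m\bigl[\dot r_s-Ax-B(u_r+u_{ff})\bigr]-C_mB\bigl(u_{sm}+d\bigr)-C_m w .
\]
The bracketed term is exactly what $\dot z$ in~\eqref{eq:Cm1} cancels, so
\[
\dot s=-G(u_{sm}+d)-C_m w .
\]
Inserting $u_{sm}=G^{-1}\rho\,\operatorname{sat}_\delta(s)$ and using $GG^{-1}=1$ (valid because $G\neq0$ is a scalar) yields the second line of~\eqref{eq:sdot_autonomous}.

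The only point needing care is that $u_r$, which appears inside $\dot z$, is the actual reset-controller output $C_Rx_r+D_Re$. Since that same $u_r$ enters the plant, the cancellation holds irrespective of the reset controller's internal dynamics. This is the decoupling property.

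\textbf{Jump invariance.} At a reset instant ($\zeta(t)\in\mathcal{J}_r$), the map~\eqref{eq:reset_controller}, equivalently $A_{\chi,\rho}$ in~\eqref{eq:A_chi_def}, acts only on $x_r$. Hence $x(t^+)=x(t)$, and since $r_s$ is a continuous exogenous signal, $e_s(t^+)=e_s(t)$.

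The main subtlety is $z$. It is defined through an integral, $z(t)=z(t_0)+\int_{t_0}^t\dot z(\tau)\,d\tau$, with no jump rule assigned to it. Although $u_r$ jumps at resets, the integrand $\dot z$ is bounded on compact intervals: this uses Assumption~\ref{assum:reference_feedforward_bounds} together with boundedness of $x$ and $x_r$ on finite intervals. Resets occur at isolated instants, so the jump in $u_r$ only changes $\dot z$ on a set of measure zero. Therefore $z$ is continuous, $z(t^+)=z(t)$, and
\[
s(t^+)=C_me_s(t^+)+z(t^+)=s(t),
\]
which establishes~\eqref{eq:s_jump}.

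\textbf{Zeno caveat.} If Zeno behaviour is a concern, the argument is stated on hybrid time domains instead. In that setting $z$ is simply a flow-only state with identity jump map, which makes~\eqref{eq:s_jump} immediate.
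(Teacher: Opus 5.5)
Your proposal is correct and follows essentially the same route as the paper: you differentiate $s=C_me_s+z$ so that $\dot z$ cancels the $C_m[\dot r_s-Ax-B(u_r+u_{ff})]$ term, then show $e_s$ and $z$ are continuous at resets, with $z(t^+)=z(t)$ because its integrand is bounded near $t_k$ (exactly the paper's Step~4). The differences are cosmetic. You read $x(t^+)=x(t)$ directly off the fact that the reset acts only on $x_r$, whereas the paper re-derives it from a variation-of-constants bound with $h\to0^+$; and the continuity of $z$ really rests on local boundedness of $\dot z$, not on the jump in $u_r$ affecting $\dot z$ only on a null set (after a reset, $u_r$ differs on the whole next interval).
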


\begin{proof}
Proof is provided in Appendix \ref{appendix:proof_lem:decoupling}.
\end{proof}

\begin{rem}
The sliding variable dynamics~\eqref{eq:sdot_autonomous} demonstrate that the ISMC isolates the matched disturbance into the lumped term $u_{sm} + d$ \cite{Utkin1996}, allowing the sliding-mode controller to function as a disturbance observer, as in \cite{ChakraborttyArcak2009}. In the ideal discontinuous limit ($\delta = 0$) and ignoring unmatched perturbations ($w \equiv 0$), enforcing the exact sliding regime ($\dot{s} \equiv 0$) yields the equivalent control $u_{sm} \equiv -d$, in the sense of Filippov's differential inclusions \cite{Utkin1996}. Because the proposed architecture employs a saturation boundary layer ($\delta > 0$) to preclude chattering in \eqref{eq:Cm1}, this exact cancellation is relaxed. The resulting bounded compensation error is quantified in Lemma~\ref{lem:equivalent_control_error}.
\end{rem}


Having established the invariance of the sliding variable across discrete resets, we now present the formulation of the closed-loop system.

Let the state vector of the closed-loop RC-ISMC system be defined as
\begin{equation}
    \xi_e(t) = \bigl[ e_s^\top(t),\; x_r^\top(t),\; s(t)\bigr]^\top \in \mathbb{R}^{n+n_r+1}.
\end{equation}
The reset condition~\eqref{eq:F_J_set} is triggered by the partial state vector $\zeta(t) = \bigl[x_r^\top(t),\; (C e_s(t))^\top\bigr]^\top$. Consequently, the flow and jump sets mapped to the full state space $\mathbb{R}^{n+n_r+1}$ are
\begin{equation}
\label{eq:C_D_int}
\begin{aligned}
    \mathcal{F}_{\mathrm{int}}
    &= \bigl\{\xi_e \in \mathbb{R}^{n+n_r+1}
       \mid \zeta(\xi_e)^\top M\,\zeta(\xi_e) \geq 0\bigr\}, \\
    \mathcal{J}_{\mathrm{int}}
    &= \bigl\{\xi_e \in \mathbb{R}^{n+n_r+1}
       \mid \zeta(\xi_e)^\top M\,\zeta(\xi_e) \leq 0\bigr\},
\end{aligned}
\end{equation}
where $M = M^\top$ is the condition matrix defined in~\eqref{eq:F_J_set}. 
Substituting the control law $u_r = C_R x_r + D_R C e_s$ into the plant dynamics~\eqref{eq:state_space_system} yields the state tracking error dynamics
\begin{equation} 
\label{eq:es_dot_full}
    \dot{e}_s = (A - BD_RC)e_s - BC_R x_r + q_r - w - B(u_{sm} + d).
\end{equation}
Combining \eqref{eq:output_error_def}, \eqref{eq:reset_controller}, \eqref{eq:sdot_autonomous}, and \eqref{eq:es_dot_full}, the continuous flow map of the RC-ISMC system is defined as
\begin{equation}
\label{eq:interconnected_flow_map}
    F_e(\xi_e, q_r, w, d) = \dot{\xi}_e 
    = A_{\xi}\,\xi_e + E_{\xi}\,q_r + W_{\xi}\,w + B_{\xi}\bigl(u_{sm} + d\bigr),
\end{equation}
where the system matrices are defined as
\begin{equation}
\label{eq:hybrid_flow_matrices}
\begin{aligned}
    A_{\xi} &=
    \begin{bmatrix}
        A_\chi  & 0_{(n+n_r) \times 1} \\
        0_{1 \times (n+n_r)} & 0
    \end{bmatrix}, \quad
    E_{\xi} =
    \begin{bmatrix}
        E_r \\ 0_{1 \times n}
    \end{bmatrix}, \quad 
    W_{\xi} =
    \begin{bmatrix}
        -E_r\\ -C_m
    \end{bmatrix}, \quad
    B_{\xi} =
    \begin{bmatrix}
        B_c \\ -G
    \end{bmatrix}.
\end{aligned}
\end{equation}
By Lemma~\ref{lem:decoupling} and its proof in Appendix \ref{appendix:proof_lem:decoupling}, we have
\begin{equation}
\label{eq:jump_components}
\begin{aligned}
    e_s(t_k^+) &=  e_s(t_k),  \ 
    x_r(t_k^+)  = A_\rho x_r(t_k),  \ 
    s(t_k^+)  = s(t_k). 
\end{aligned}
\end{equation}
From \eqref{eq:jump_components}, the jump map of the RC-ISMC system is given by
\begin{equation}
\label{eq:interconnected_jump_map}
    G_{\mathrm{e}}(\xi_e) = \xi_e^+ = A_{\xi,\rho}\,\xi_e,\quad A_{\xi,\rho} = \operatorname{diag}(A_{\chi,\rho},\; 1).
\end{equation}
Finally, combining \eqref{eq:interconnected_flow_map} and \eqref{eq:interconnected_jump_map}, the closed-loop RC-ISMC system is formulated as
\begin{equation}
\label{eq:hybrid_interconnected}
    \mathcal{H}_{\mathrm{int}}:
    \begin{cases}
        \dot{\xi}_e = F_e(\xi_e, q_r, w, d), & \text{if } \xi_e \in \mathcal{F}_{\mathrm{int}}, \\
        \xi_e^{+} = G_{\mathrm{e}}(\xi_e), & \text{if } \xi_e \in \mathcal{J}_{\mathrm{int}}.
    \end{cases}
\end{equation}
The block diagram of this RC-ISMC system is depicted in Fig.~\ref{fig:closed_loop_reset_ismc_system}.

\begin{figure}[htp]
	\centering
	\includegraphics[width=0.6\columnwidth]{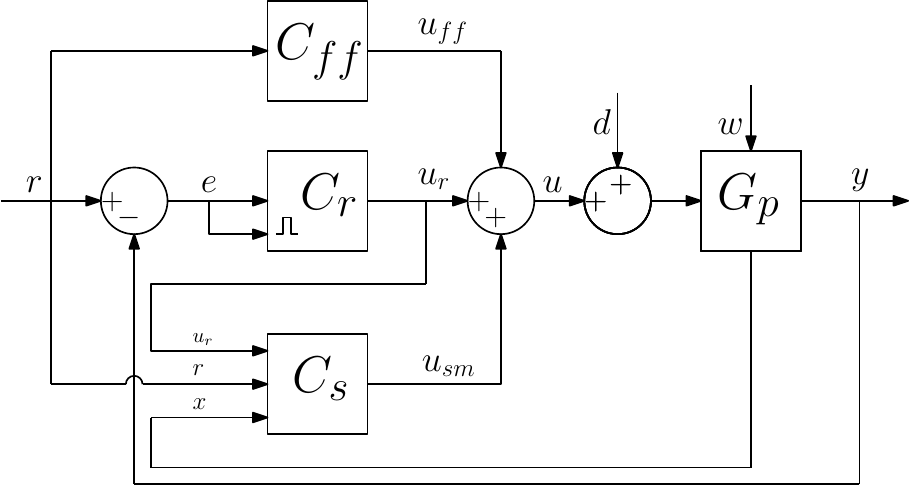}
	\caption{Block diagram of the closed-loop RC-ISMC system.}
	\label{fig:closed_loop_reset_ismc_system}
\end{figure}

\begin{assum} 
\label{assum:zeno}
The system $\mathcal{H}_{\mathrm{int}}$ does not exhibit Zeno behavior. Specifically, there exists a minimum dwell time $\tau_{\min} > 0$ such that any two consecutive reset instants $t_k$ and $t_{k+1}$ satisfy $t_{k+1} - t_k \geq \tau_{\min}$ for all $k \in \mathbb{N}$.
\end{assum}

\begin{rem}
In practical digital implementations, Assumption~\ref{assum:zeno} is inherently satisfied. The finite sampling period of the hardware and the use of a zero-order hold (ZOH) mechanism physically enforce a positive minimum dwell time between consecutive jumps. A formal treatment of these discretization effects is detailed in~\cite{Banos2016}.
\end{rem}
\subsection{Reachability of the Sliding Surface}
\label{sec:reachability}
\begin{lem}
\label{lem:finite_time_s}
Consider the closed-loop RC-ISMC system $\mathcal{H}_{\mathrm{int}}$~\eqref{eq:hybrid_interconnected} under Assumptions~\ref{assum:reference_feedforward_bounds}--\ref{assum:zeno}. The boundary layer $\Omega_\delta = \{s \in \mathbb{R} \mid |s| \leq \delta\}$ is positively invariant, and any state trajectory with an initial condition $s(0) \notin \Omega_\delta$ reaches $\Omega_\delta$ in a finite time $t^*$ bounded by
\begin{equation}
\label{eq:reaching_time}
    t^* \leq ({|s(0)| - \delta})/{\eta}, \quad \eta = \rho - |G|\Delta - \|C_m\|\Delta_w > 0.
\end{equation}
Furthermore, initializing the integral state as $z(0) = -C_m e_s(0)$ enforces $s(0) = 0 \in \Omega_\delta$, thereby eliminating the reaching phase entirely ($t^* = 0$).
\end{lem}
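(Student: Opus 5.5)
The plan is to work entirely with the scalar sliding dynamics from Lemma~\ref{lem:decoupling}, since the reset mechanism plays no role there. On flows, $s$ obeys \eqref{eq:sdot_autonomous}. Across resets, \eqref{eq:s_jump} gives $s(t_k^+)=s(t_k)$. I will therefore use the Lyapunov-like function $V_s(s)=\tfrac12 s^2$. It does not jump at reset instants, so it suffices to analyse it on the flow intervals $[t_k,t_{k+1})$. By Assumption~\ref{assum:zeno} these intervals have length at least $\tau_{\min}$ and cover $[0,\infty)$ without accumulation.

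\textbf{Flow estimate outside the layer.} Whenever $|s|>\delta$, we have $\operatorname{sat}_\delta(s)=\operatorname{sgn}(s)$. Multiplying \eqref{eq:sdot_autonomous} by $s$ and using Assumptions~\ref{assum:perturbation_noise_bounds}--\ref{assum:disturbance_bound} gives
\begin{equation*}
s\dot s \le |s|\bigl(|G|\Delta+\|C_m\|\Delta_w\bigr)-\rho|s| = -\eta|s| .
\end{equation*}
Here $\eta>0$ follows from the gain choice \eqref{eq:rho_gain_condition}. Since $s$ is absolutely continuous on each flow interval and continuous across jumps, $|s(t)|$ is absolutely continuous on any interval where $|s|>\delta$. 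There it satisfies $\tfrac{d}{dt}|s|\le-\eta$ for almost all $t$. Integrating, and concatenating across resets using \eqref{eq:s_jump}, yields $|s(t)|\le|s(0)|-\eta t$ as long as $|s|$ remains above $\delta$. Hence $|s|$ must hit $\delta$ no later than $t^*=(|s(0)|-\delta)/\eta$, which is \eqref{eq:reaching_time}.

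\textbf{Positive invariance.} Suppose $|s(t_1)|\le\delta$ and $|s(t_2)|>\delta$ for some $t_2>t_1$. Jumps leave $s$ unchanged, so this could only happen through flow. Let $\bar t=\sup\{t\in[t_1,t_2]: |s(t)|\le\delta\}$. By continuity $|s(\bar t)|=\delta$, and $|s|>\delta$ on $(\bar t,t_2]$. On that interval the estimate above gives $\tfrac{d}{dt}|s|\le-\eta<0$ almost everywhere, so $|s(t_2)|\le|s(\bar t)|=\delta$. This is a contradiction, so $\Omega_\delta$ is positively invariant.

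\textbf{Initialization and the main technical step.} The final claim is immediate: $z(0)=-C_me_s(0)$ gives $s(0)=C_me_s(0)+z(0)=0$, and invariance then keeps $s\in\Omega_\delta$ for all $t\ge0$, so $t^*=0$. The only delicate point is the hybrid time argument. The a.e. differential inequality for $|s|$ has to be glued across reset instants to give a global-in-$t$ estimate. Continuity of $s$ through jumps from Lemma~\ref{lem:decoupling}, together with non-accumulation of resets from Assumption~\ref{assum:zeno}, makes $s$ absolutely continuous on every compact time interval. I would state this explicitly before integrating.
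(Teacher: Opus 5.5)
Your proposal is correct and follows essentially the same route as the paper: you integrate the bound $\frac{d}{dt}|s|\le-\eta$, valid outside $\Omega_\delta$, to obtain the reaching-time estimate, and you prove invariance from the same sign condition. Your last-exit-time contradiction, used in place of the paper's evaluation of $\dot V_s\le-\delta\eta$ exactly on $|s|=\delta$, and your explicit gluing across resets via \eqref{eq:s_jump} and Assumption~\ref{assum:zeno} are simply more careful renderings of the paper's argument.
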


\begin{proof}
Proof is provided in Appendix \ref{appendix:proof_lem:finite_time_s}.
\end{proof}

\subsection{Lyapunov Stability Analysis}

This section presents the stability analysis for the closed-loop RC-ISMC system. The analysis begins with Lemma~\ref{lem:equivalent_control_error}, which bounds the residual disturbance error $v = u_{sm} + d$ within the boundary layer.

\begin{lem} 
\label{lem:equivalent_control_error}
Consider the closed-loop RC-ISMC system $\mathcal{H}_{\mathrm{int}}$~\eqref{eq:hybrid_interconnected} under Assumptions~\ref{assum:reference_feedforward_bounds}--\ref{assum:zeno}. Suppose the system state reaches the positively invariant boundary layer $\Omega_\delta = \{s \in \mathbb{R} \mid |s| \leq \delta\}$ at some finite time $t^* \geq 0$. Then, for all $t > t^*$, the signal $v(t) = u_{sm}(t) + d(t)$ is differentiable and satisfies the differential equation
\begin{equation}
\label{eq:v_differential}
    \dot{v}(t) +  ({\rho}/{\delta})v(t) = \dot{d}(t) - G^{-1}({\rho}/{\delta})C_m w(t).
\end{equation}
From \eqref{eq:v_differential}, as $t \to \infty$, the magnitude of $v(t)$ satisfies the ultimate bound
\begin{equation}
\label{eq:v_ultimate_bound}
    \limsup_{t \to \infty} |v(t)| \leq \bar{v}_\delta = \frac{\delta \Delta_1}{\rho} + \frac{\|C_m\|\Delta_w}{|G|}.
\end{equation}
\end{lem}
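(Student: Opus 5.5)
The plan is to work entirely inside the boundary layer, where the saturation becomes linear. By Lemma~\ref{lem:finite_time_s}, $\Omega_\delta$ is positively invariant. Hence for all $t > t^*$ we have $|s(t)|\le\delta$ and $\operatorname{sat}_\delta(s(t)) = s(t)/\delta$, so that
\[
u_{sm}(t) = G^{-1}(\rho/\delta)\,s(t) \quad \text{and} \quad v(t) = G^{-1}(\rho/\delta)\,s(t) + d(t).
\]
By Lemma~\ref{lem:decoupling}, $s$ is continuous across resets, since $s(t^+)=s(t)$ by \eqref{eq:s_jump}. During flows it is absolutely continuous with derivative given by \eqref{eq:sdot_autonomous}. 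Together with Assumption~\ref{assum:zeno}, this makes $s$ absolutely continuous on $[t^*,\infty)$ with no jump contributions. Since $d$ is locally absolutely continuous by Assumption~\ref{assum:disturbance_bound}, $v$ is locally absolutely continuous. Its derivative exists almost everywhere and classically wherever $\dot d$ and $w$ are continuous. I would phrase "differentiable" in this almost-everywhere sense.

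Next I differentiate: $\dot v = G^{-1}(\rho/\delta)\dot s + \dot d$. Substituting the first line of \eqref{eq:sdot_autonomous}, $\dot s = -G v - C_m w$, gives
\[
\dot v = -(\rho/\delta)v - G^{-1}(\rho/\delta)C_m w + \dot d .
\]
This is exactly \eqref{eq:v_differential}. The key point is that $\dot s$ is expressed through $v$ itself, so the equation closes in $v$ without reference to the reset state $x_r$. That is the decoupling from Lemma~\ref{lem:decoupling} at work.

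For the ultimate bound I solve the scalar linear ODE by variation of constants from $t^*$:
\[
v(t) = e^{-(\rho/\delta)(t-t^*)}v(t^*) + \int_{t^*}^t e^{-(\rho/\delta)(t-\tau)}\bigl[\dot d(\tau) - G^{-1}(\rho/\delta)C_m w(\tau)\bigr]\,d\tau .
\]
Bounding the forcing with Assumptions~\ref{assum:perturbation_noise_bounds} and~\ref{assum:disturbance_bound} gives $|\dot d|\le\Delta_1$ and $|G^{-1}(\rho/\delta)C_m w|\le (\rho/\delta)\|C_m\|\Delta_w/|G|$. Since $\int_{t^*}^t e^{-(\rho/\delta)(t-\tau)}d\tau\le \delta/\rho$, this yields
\[
|v(t)|\le e^{-(\rho/\delta)(t-t^*)}|v(t^*)| + \frac{\delta\Delta_1}{\rho} + \frac{\|C_m\|\Delta_w}{|G|}.
\]
The initial term is finite because $|v(t^*)|\le \rho/|G|+\Delta$. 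Taking $\limsup$ then gives \eqref{eq:v_ultimate_bound}. The $\rho/\delta$ factor in the $w$ term cancels against the integral, so the second term is independent of $\delta$.

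The main obstacle is the regularity step. I need to justify that $v$ satisfies the ODE on all of $(t^*,\infty)$ despite the reset instants, and that the solution formula is legitimate in the Carathéodory sense when $\dot d$ exists only almost everywhere. I would handle this by working with integral identities for absolutely continuous functions on each flow interval $[t_k,t_{k+1}]$. I would then concatenate these intervals using the continuity of $s$ (and hence $v$) at each $t_k$, which follows from \eqref{eq:s_jump}. Assumption~\ref{assum:zeno} ensures that only finitely many such intervals lie in any compact time window.
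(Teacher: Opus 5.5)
Your proposal is correct and follows the same route as the paper's proof. Inside the invariant layer you linearize the saturation, substitute $\dot s = -Gv - C_m w$ to obtain a closed scalar ODE in $v$, and bound the variation-of-constants solution using $\int_{t^*}^t e^{-(\rho/\delta)(t-\tau)}\,d\tau \le \delta/\rho$. Your extra regularity care is a sound refinement of a point the paper passes over: you establish local absolute continuity of $v$ across resets via $s(t_k^+)=s(t_k)$, and you read ``differentiable'' as almost everywhere, since $d$ is only locally absolutely continuous.
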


\begin{proof}
    Proof is provided in Appendix \ref{appendix:proof_lem equivalent_control_error}.
\end{proof}

\begin{thm} 
\label{thm:continuous_flow_stability}
Consider the closed-loop RC-ISMC system $\mathcal{H}_{\mathrm{int}}$~\eqref{eq:hybrid_interconnected} under Assumptions~\ref{assum:reference_feedforward_bounds}--\ref{assum:zeno}. Let $P = P^\top > 0$ and $\lambda_0 > 0$ be the matrix and scalar satisfying Assumption~\ref{assum:iss_reset_input}. For the system state vector $\xi_e = [\chi^\top,\; s]^\top$, define the ISS-Lyapunov function 
\begin{equation}
\label{eq:composite_lyapunov}
    V(\xi_e) = \chi^\top P \chi +  {s^2}/{2}.
\end{equation}
Then, the following properties hold:
\begin{enumerate}[label=(\roman*)]
\item \textit{Continuous flow dissipation:} For any arbitrary $\epsilon > 0$, there exists a finite time $T_\epsilon \geq t^*$ such that the derivative of $V$ in \eqref{eq:composite_lyapunov} along the continuous flows $\xi_e \in \mathcal{F}_{\mathrm{int}}$ for all $t \geq T_\epsilon$ satisfies
\begin{equation}
\label{eq:global_flow_bound}
    \dot{V}(\xi_e) \leq -\alpha V(\xi_e) + c_q + c_w + c_{\mathrm{coup},\epsilon} + c_s,
\end{equation}
where the decay rate is $\alpha = \min\bigl(\frac{\lambda_0}{4},\, \frac{\rho}{\delta}\bigr)$, and 
\begin{equation}
\label{eq:constants_def}
\begin{aligned}
    &c_q  = \frac{4}{\lambda_0}\bigl\|E_r^\top P E_r\bigr\|\bar{q}_r^2, \ c_{\mathrm{coup},\epsilon} = \frac{4}{\lambda_0}\bigl\|B_c^\top P B_c\bigr\|(\bar{v}_\delta + \epsilon)^2, \\
    &  c_w = \frac{4}{\lambda_0}\bigl\|E_r^\top P E_r\bigr\|\Delta_w^2,\  c_s = \frac{\delta}{\rho}\bigl(G^2\Delta^2 + \|C_m\|^2\Delta_w^2\bigr),
\end{aligned}
\end{equation}
with $\bar{v}_\delta$ defined in~\eqref{eq:v_ultimate_bound}.

\item \textit{Jump non-expansion:} Across discrete resets $\xi_e \in \mathcal{J}_{\mathrm{int}}$, the Lyapunov function satisfies
\begin{equation}
\label{eq:jump_nonexpansive}
    V(\xi_e^+) \leq V(\xi_e).
\end{equation}
\end{enumerate}
Thus, the system is ISS with respect to $q_r(t)$, $w(t)$, and $d(t)$. Furthermore, the state $\xi_e(t)$ is UUB with
\begin{equation}
\label{eq:ultimate_bound}
    \limsup_{t\to\infty}\|\xi_e(t)\| \leq \sqrt{\frac{c_{\mathrm{total}} }{\alpha\,\lambda^\ast}},
\end{equation}
where $c_{\mathrm{total}} = c_q + c_w + c_{\mathrm{coup},\delta} + c_s$ evaluated precisely at $c_{\mathrm{coup},\delta} = \frac{4}{\lambda_0}\bigl\|B_c^\top P B_c\bigr\|\bar{v}_\delta^2$, and $\lambda^\ast = \min\bigl(\lambda_{\min}(P), 1/2\bigr)$.
\end{thm}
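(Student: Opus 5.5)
The plan is to split $V$ into the reset part $V_e(\chi)=\chi^\top P\chi$ and the sliding part $V_s=s^2/2$. Each part is handled with an earlier result, and the pieces are then recombined.

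\textbf{Flow dissipation (i).} Because $A_\xi$ is block-diagonal, the $\chi$-component of the flow map \eqref{eq:interconnected_flow_map} is exactly the nominal reset flow \eqref{eq:chi_nominal}, with $d$ replaced by $v=u_{sm}+d$. Applying Corollary~\ref{cor:quadratic_iss_reset_sufficient} with this substitution gives
\[
\dot V_e\le -\tfrac{\lambda_0}{4}V_e+\sigma_r\bar q_r^2+\sigma_w\Delta_w^2+\sigma_d|v|^2 .
\]
To control $|v|$, we invoke Lemma~\ref{lem:equivalent_control_error}. Since $\limsup|v|\le\bar v_\delta$, for each $\epsilon>0$ there is a $T_\epsilon\ge t^*$ with $|v(t)|\le\bar v_\delta+\epsilon$ for all $t\ge T_\epsilon$. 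This produces the terms $c_q$, $c_w$ and $c_{\mathrm{coup},\epsilon}$.

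For the sliding part we use \eqref{eq:sdot_autonomous}. For $t\ge t^*$, Lemma~\ref{lem:finite_time_s} places $s$ in $\Omega_\delta$, so $\operatorname{sat}_\delta(s)=s/\delta$ and
\[
\dot V_s=s\dot s=-\tfrac{\rho}{\delta}s^2-s(Gd+C_mw).
\]
Young's inequality with weight $\rho/(2\delta)$ bounds the cross term by $\tfrac{\rho}{2\delta}s^2+\tfrac{\delta}{2\rho}(Gd+C_mw)^2$. Then $(a+b)^2\le2a^2+2b^2$ gives
\[
\dot V_s\le-\tfrac{\rho}{\delta}V_s+\tfrac{\delta}{\rho}\bigl(G^2\Delta^2+\|C_m\|^2\Delta_w^2\bigr)=-\tfrac{\rho}{\delta}V_s+c_s .
\]
Adding the two estimates and setting $\alpha=\min(\lambda_0/4,\rho/\delta)$ yields \eqref{eq:global_flow_bound}.

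\textbf{Jump non-expansion (ii).} Lemma~\ref{lem:decoupling} (equivalently \eqref{eq:jump_components}) gives $s^+=s$, so $V_s$ is unchanged at a jump. Moreover, $\chi^+=A_{\chi,\rho}\chi$, and the jump set $\mathcal J_{\mathrm{int}}$ projects onto $\mathcal J_r$. Hence \eqref{eq:tracking_jump_lmi} gives $V_e(\chi^+)\le V_e(\chi)$, which proves \eqref{eq:jump_nonexpansive}.

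\textbf{ISS and UUB.} On $[T_\epsilon,\infty)$ we combine the flow inequality with the jump inequality. By Assumption~\ref{assum:zeno}, resets are isolated, so we can integrate piecewise over each flow interval; comparison on each interval together with non-increase at the jumps gives
\[
V(t)\le e^{-\alpha(t-T_\epsilon)}V(T_\epsilon)+\frac{c_q+c_w+c_{\mathrm{coup},\epsilon}+c_s}{\alpha}.
\]
On the finite interval $[0,T_\epsilon]$ the state is bounded, since the flow is linear with bounded inputs ($|u_{sm}|\le|G|^{-1}\rho$) and the jumps are non-expansive. Combined with the standard sum-of-squared-input-gains structure, this gives the ISS estimate. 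Letting $t\to\infty$ and then $\epsilon\to0$ gives $\limsup V\le c_{\mathrm{total}}/\alpha$. Finally, $V\ge\lambda^*\|\xi_e\|^2$ with $\lambda^*=\min(\lambda_{\min}(P),1/2)$, which yields \eqref{eq:ultimate_bound}.

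\textbf{Main obstacle.} I expect the delicate point to be the passage $\epsilon\to0$ together with the claim of ISS as opposed to mere UUB. The bound on $v$ is only asymptotic, and $v$ is a signal generated inside the loop rather than an exogenous input. The way to handle this is to treat $\chi$ as ISS with respect to $v$, and to note that $v$ obeys the stable scalar filter \eqref{eq:v_differential} driven by $\dot d$ and $w$. This makes the overall system a cascade of two ISS systems, so it is ISS with respect to the exogenous signals.
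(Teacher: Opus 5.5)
Your proposal is correct and follows the paper's proof essentially step for step. The shared steps are: the split $V=V_e+V_s$; Corollary~\ref{cor:quadratic_iss_reset_sufficient} applied with $d$ replaced by $v=u_{sm}+d$; the time $T_\epsilon$ obtained from Lemma~\ref{lem:equivalent_control_error}; Young's inequality on the sliding cross terms, where your grouped version gives exactly the same $c_s$ and rate $\rho/\delta$; summation with $\alpha=\min(\lambda_0/4,\rho/\delta)$; $s^+=s$ together with \eqref{eq:tracking_jump_lmi} at resets; the comparison argument on $[T_\epsilon,\infty)$ followed by $\epsilon\to0$; and the lower bound $V\ge\lambda^\ast\|\xi_e\|^2$.

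Your cascade remark on ISS goes beyond the paper, which only asserts ISS in one line. If you pursue it, keep three caveats in mind. The filter \eqref{eq:v_differential} holds only after $t^\ast$. Routing the cascade through $v$ makes $\dot d$ an input. The saturation restricts ISS to inputs satisfying $|G|\Delta+\|C_m\|\Delta_w<\rho$.
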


\begin{proof}
    Proof is provided in Appendix \ref{appendix:proof_lemma_flow}.
\end{proof}


\begin{rem} 
\label{rem:ultimate_bound}
As established in the proof of Theorem~\ref{thm:continuous_flow_stability} (Appendix~\ref{appendix:proof_lemma_flow}, \eqref{eq:v_chi}), the state tracking error $e_s(t)$ is a sub-vector of $\chi(t)$. Applying the spectral lower bound $V_e(\chi) \geq \lambda_{\min}(P)\|\chi\|^2$ thus yields the conservative limit:
\begin{equation}
\label{eq:es_ultimate_bound}
    \limsup_{t \to \infty} \|e_s(t)\| \leq  \limsup_{t\to\infty}\|\chi(t)\| \leq \sqrt{\frac{c_q + c_w + c_{\mathrm{coup},\delta}}{\alpha \lambda_{\min}(P)}}.
\end{equation}
\end{rem}


\begin{rem} 
\label{rem:regulation_special_case}
Consider the closed-loop system $\mathcal{H}_{\mathrm{int}}$~\eqref{eq:hybrid_interconnected} under Assumptions~\ref{assum:reference_feedforward_bounds}--\ref{assum:zeno} in the unperturbed case, where $d \equiv 0$, $w \equiv 0$, and $q_r \equiv 0$. The condition $q_r \equiv 0$ implies either a pure regulation problem where $r_s \equiv 0$ and $u_{ff} \equiv 0$, or that the feedforward control $u_{ff}$ perfectly tracks the reference dynamics. In this scenario, we have $c_q = c_w = c_{\mathrm{coup},\delta} = c_s = 0$. Consequently, the state tracking error converges asymptotically to zero:
\begin{equation}
    \lim_{t\to\infty} \|e_s(t)\| = 0.
\end{equation}
By evaluating equations \eqref{eq:global_flow_bound} and~\eqref{eq:jump_nonexpansive} under these zero-input conditions, the origin of the error dynamics recovers global asymptotic stability (GAS), consistent with \cite[Theorem~13.5]{bainov1989systems}.
\end{rem}

\begin{rem} 
\label{rem:limiting_behavior_delta}
As the boundary layer thickness vanishes ($\delta \to 0^+$), the ISMC law~\eqref{eq:Cm1} recovers the ideal signum switching function, and the boundary layer collapses to the exact sliding surface $\mathcal{S} = \{s = 0\}$. Consequently, the residual sliding energy vanishes ($c_s \to 0$) and the decay rate saturates at $\alpha = \lambda_0/4$. On $\mathcal{S}$, the equivalent control perfectly rejects the matched disturbance $d(t)$ (yielding $u_{sm} + d = 0$), consistent with the classical sliding mode theory~\cite{Utkin1996}. Evaluating the state tracking error bound~\eqref{eq:es_ultimate_bound} under these limits yields
\begin{equation}
\label{eq:ideal_limit_bound}
    \limsup_{t\to\infty}\|e_s(t)\| \leq 2 \sqrt{\frac{c_q + c_w + c_{\mathrm{coup},0}}{\lambda_0 \lambda_{\min}(P)}}.
\end{equation}
where $c_q$ and $c_w$ remain as defined in~\eqref{eq:constants_def}, and the coupling constant reduces to $c_{\mathrm{coup},0} \triangleq \frac{4}{\lambda_0}\|B_c^\top P B_c\| \bigl( \frac{\|C_m\|\Delta_w}{|G|} \bigr)^2$. While the proposed ISMC is designed to attenuate matched disturbances, future work could suppress the unmatched residual terms $c_w$ and $c_{\mathrm{coup},0}$ by incorporating techniques such as the sliding surface modification method in \cite{castanos2006analysis}.
\end{rem}

\section{Case Study: Numerical Validation}
\label{sec:results}

This section presents a numerical simulation of a second-order mass-spring-damper system to illustrate the UUB property of the RC-ISMC architecture.

\subsection{System Design and Exogenous Inputs}
The linear plant ($m = 1.0$\,kg, $c = 1.0$\,N$\cdot$s/m, $k = 10.0$\,N/m) is defined by~\eqref{eq:state_space_system} with the system matrices
\begin{equation}
    A = \begin{bmatrix} 0 & 1 \\ -10 & -1 \end{bmatrix}, \quad
    B = \begin{bmatrix} 0 \\ 1 \end{bmatrix}, \quad
    C = \begin{bmatrix} 1 & 0 \end{bmatrix},
\end{equation}
corresponding to the transfer function $P(s) = 1/(s^2+s+10)$. 

A First-Order Reset Element (FORE)~\cite{5712180} in parallel with a proportional feedthrough term serves as the nominal controller. It is defined by~\eqref{eq:reset_controller} with $A_R = -2.0$, $B_R = 1.0$, $C_R = 15.0$, and $D_R = 2.0$, yielding the base linear controller $C(s) = (2s+19)/(s+2)$. Controller resets to the origin ($\gamma = 0$) are triggered when the tracking error crosses zero ($e(t) = 0$), corresponding to the matrix $M = \operatorname{diag}(0, 1)$ in \eqref{eq:F_J_set}, with a minimum dwell time of $0.05\,$s.

The state tracks a reference trajectory $r_s(t) = [\,0.5\sin(1.5t),\ 0.75\cos(1.5t)\,]^\top$. Since $\dot r_{s,1}(t) \equiv r_{s,2}(t)$ by construction, $r_s(t)$ is already matched through $B$; hence no unmatched feedforward channel is needed, and a scalar analytic feedforward $u_{ff}(t)$ suffices to cancel the residual mismatch $\dot r_s - Ar_s$ in the second state equation. The initial condition is chosen outside the sliding boundary layer, $x_0=[1.2,\,-0.8]^\top$, $x_{R,0}=0.15$, giving $s(0)=0.80 \gg \delta$, so that the reaching-phase transient is visible in simulation.

The system is subjected to an oscillatory matched disturbance $d(t) = 3.0\sin(4.0t) + 2.0\cos(10.0t)$, bounded by $\Delta = 5$, and the unmatched perturbation $w(t) = 0.5\sin(5t)$, bounded by $\Delta_w = 0.5$. The sliding surface matrix is designed as $C_m = [2.0,\, 1.0]$, yielding $G = C_m B = 1.0$. The switching gain $\rho = 6.5$ strictly satisfies the reachability condition $\rho > |G|\Delta+ \|C_m\|\Delta_w$, with margin $\eta = \rho-|G|\Delta-\|C_m\|\Delta_w \approx 0.38$. Finally, a boundary layer thickness of $\delta = 0.01$ is selected to eliminate high-frequency chattering.

\subsection{Numerical Verification of ISS for the Reset System}

With the closed-loop state vector defined as $\chi = \begin{bmatrix} e_{s1} & e_{s2} & x_r \end{bmatrix}^\top$, where the tracking error is $e = e_{s1}$, the continuous-flow matrix of the reset system ($u_{sm} \equiv 0$) is given by
\begin{equation}
    A_\chi = \begin{bmatrix} A - BD_RC & -BC_R \\ B_RC & A_R \end{bmatrix}
    = \begin{bmatrix} 0 & 1 & 0 \\ -12 & -1 & -15 \\ 1 & 0 & -2 \end{bmatrix}.
\end{equation}
Thus, $A_\chi$ is strictly Hurwitz and decoupled from the ISMC parameters $(\rho,\delta,\Delta)$.

To establish the required Lyapunov matrix, equations \eqref{eq:tracking_flow_lmi} and \eqref{eq:tracking_jump_lmi} from Assumption~\ref{assum:iss_reset_input} are solved numerically, subject to the constraint $P_{23} = P_{32} = 0$. This yields the feasible solution
\begin{equation}
    P = \begin{bmatrix}
        1.6787 & 0.0922 & 0.9217 \\
        0.0922 & 0.1260 & 0 \\
        0.9217 & 0 & 4.7427
    \end{bmatrix}, \qquad \lambda_0 = 0.1314.
\end{equation}
For this specific matrix $P$, the maximum eigenvalue of the continuous-flow condition matrix, $\lambda_{\max}(A_\chi^\top P + P A_\chi + \lambda_0 P)$, evaluates to $-2.13 \times 10^{-8} < 0$.
Furthermore, when evaluated on the jump set $\mathcal{J}_r$ (where $\gamma = 0$ and $e_{s1} = 0$), the quadratic form for the discrete transition reduces to:
\begin{equation}
    \chi^\top \bigl(A_{\chi,\rho}^\top P A_{\chi,\rho} - P\bigr) \chi \Big|_{e_{s1}=0} = -4.7427\,x_r(t_k)^2 \leq 0,
\end{equation}
which confirms that the jump condition in \eqref{eq:tracking_jump_lmi} is satisfied, guaranteeing \eqref{eq:nominal_iss_jump}.

To complete the verification, we apply Remark~\ref{rem:lmi_verification} to evaluate the $\mathcal{H}_\beta$-condition by establishing the SPR of $H_\beta(s) = C_{h_\beta}(sI - A_\chi)^{-1}B_{h_\beta}$. Utilizing $C_{h_\beta} = \begin{bmatrix} 0 & 0 & P_{33} \end{bmatrix}$, numerical evaluation yields $\operatorname{Re}\{H_\beta(j\omega)\} \geq 9.49 \times 10^{-6} > 0$ over $\omega \in [10^{-2}, 10^3]$\,rad/s. Consequently, Assumption~\ref{assum:iss_reset_input} is fulfilled, and the ISS of the closed-loop reset system is guaranteed via Corollary~\ref{cor:quadratic_iss_reset_sufficient}.

\subsection{Simulation Results}
\subsubsection{Lyapunov Stability and Sliding Dynamics}
Figure~\ref{fig:flow_jump_s} illustrates the state trajectories and the sliding variable dynamics of the RC-ISMC system. The top and middle subplots show the Lyapunov function $V(\xi_e)$ decaying toward its ultimate bound during flows and satisfying the non-positive jump condition $V(t_k^+) \leq V(t_k)$ at resets, thereby verifying Theorem~\ref{thm:continuous_flow_stability}. The bottom subplot demonstrates that the ISMC law $u_{sm}$ drives $s(t)$ from $s(0)=0.80$ into the boundary layer ($|s| \leq \delta = 0.01$), remaining confined thereafter.
\begin{figure}[htbp]
    \centering
    \includegraphics[width=0.9\columnwidth]{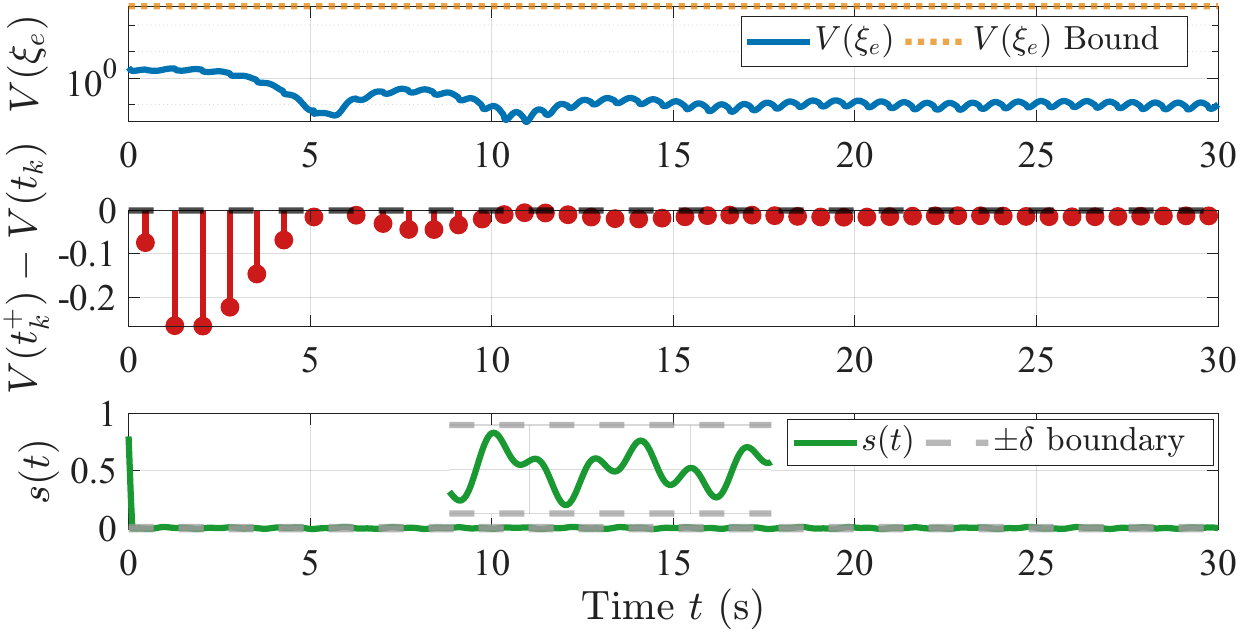}
    \caption{RC-ISMC system response. (Top) Evolution of the composite Lyapunov function $V(\xi_e)$ during continuous flows. (Middle) Non-positive increment $V(t_k^+) - V(t_k) \leq 0$ at reset instants. (Bottom) Trajectory of the sliding variable $s(t)$ starting from $s(0)=0.80$ and converging into the boundary layer $|s| \leq \delta$.}
    \label{fig:flow_jump_s}
\end{figure}

\subsubsection{Ultimate Boundedness of the State Tracking Error}
Since the feedforward term $u_{ff}(t)$ is designed to exactly cancel the reference mismatch, we have $q_r(t) \equiv 0$ and thus $c_q = 0$. Evaluating the remaining UUB constants in \eqref{eq:constants_def} yields $c_w = 12.8149$ and $c_{\mathrm{coup},\delta} = 5.1988$. The associated decay rate is $\alpha = \min\bigl(\lambda_0/4,\ \rho/\delta\bigr) = \lambda_0/4 \approx 0.0329$. Projecting the composite Lyapunov ellipsoid onto the tracking-error subspace then yields the ultimate bound
\begin{equation}
    \limsup_{t\to\infty}\|e_s(t)\| \;\leq\;   67.6362.
\end{equation} 

Figure~\ref{fig:es_uub} compares this conservative theoretical limit against the simulated tracking error $\|e_s(t)\|$ for both the standalone reset and RC-ISMC systems. This inherent conservatism arises primarily from the decoupled Lyapunov structure $V(\xi_e) = V_e(\chi) + \frac{1}{2}s^2$, sequential applications of Young's inequality, and the minimum-eigenvalue projection $\lambda_{\min}(P)$. While such a loose theoretical bound limits its direct utility for precise practical tuning, the simulation reveals that the RC-ISMC system achieves a lower steady-state tracking error than the baseline reset controller. This demonstrates the architecture's practical potential for high-precision tracking applications, motivating comprehensive parameter optimization as a key direction for future research.

\begin{figure}[htbp]
    \centering
    \includegraphics[width=0.9\columnwidth]{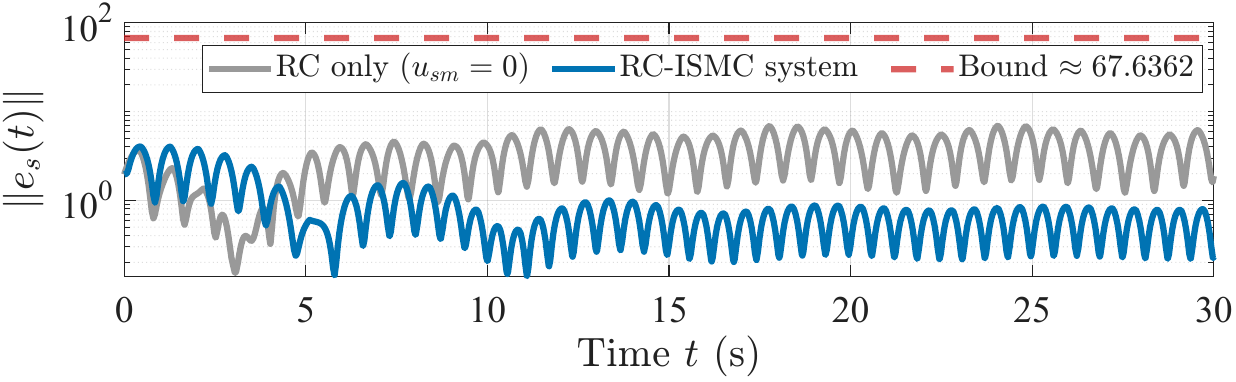}
   \caption{State tracking error $\|e_s(t)\|$ of the RC-ISMC system compared against the standalone reset controller and the theoretical UUB bound.}
    \label{fig:es_uub}
\end{figure}

Furthermore, a numerical illustration of the unperturbed regulation case discussed in Remark~\ref{rem:regulation_special_case} is presented in Fig.~\ref{fig:es_regulation}. By the end of the $600$\,s simulation, the state norm decays to $\|e_s\| = 1.001 \times 10^{-16}$, reaching the limit of numerical precision. This behavior aligns with the theoretically predicted GAS.

\begin{figure}[htbp]
    \centering
    \includegraphics[width=0.9\columnwidth]{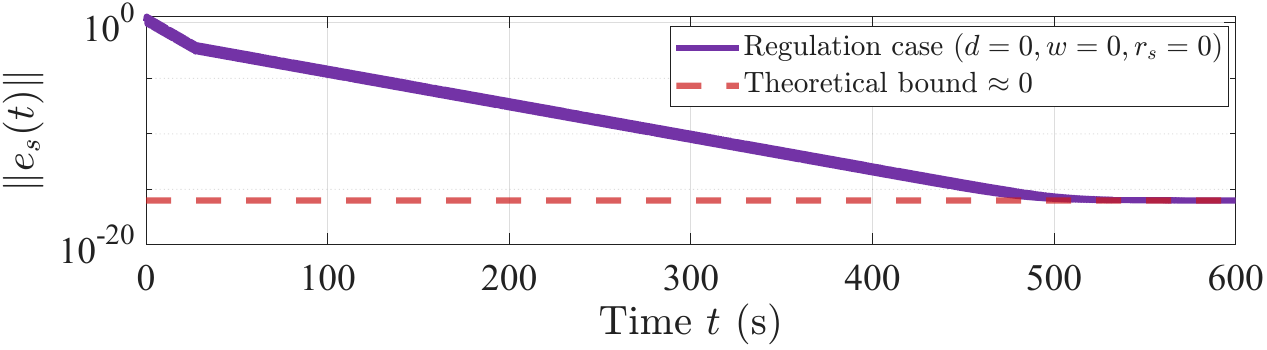}
    \caption{Unperturbed regulation case ($d=0$, $w=0$, $r_s=0$): The tracking error norm $\|e_s(t)\|$ converges asymptotically to zero.}
    \label{fig:es_regulation}
\end{figure}

\section{Conclusion}
\label{sec:conc}

This study presented an ISS analysis of the RC-ISMC system. First, the finite-time reachability of a boundary layer around the sliding surface was established. A Lyapunov-based analysis then proved that the RC-ISMC system is ISS and UUB under exogenous references and both matched and unmatched disturbances, recovering GAS in the absence of perturbations. Numerical results illustrated the theoretical findings.

Several limitations of this study point to directions for future work. The present results establish stability guarantees only; the derived bounds are conservative, owing to the reliance on Young's inequality and suboptimal Lyapunov certificate, and translating these guarantees into practical performance improvements will require dedicated tuning methods for the design parameters (e.g., $\rho$, $\delta$, $\gamma$). The analysis is developed under full-state feedback and focuses on a reset element with a single reset state, albeit one widely used in practice, while the ISMC layer is designed to attenuate matched disturbances only. Future work will therefore pursue output-feedback designs, reset controllers with higher-order reset states, rejection of a broader class of perturbations, tighter bounds, systematic tuning methodologies, and experimental validation.
 
\bibliographystyle{unsrt}
\bibliography{Ref}

\appendix
\subsection{Proof of Corollary \ref{cor:quadratic_iss_reset_sufficient}}
\label{appendix:proof_quadratic_iss_reset}
\begin{proof}
\textit{Step 1: Lyapunov derivative during flows.}
Let $V_e = \chi^\top P\chi$. Differentiating along the augmented flow $\dot{\chi} = A_\chi\chi + E_r(q_r - w) + B_c d$ in \eqref{eq:chi_nominal} gives
\begin{equation}
\label{eq: dotve1}
\begin{aligned}
    \dot{V}_e
    &= \dot{\chi}^\top P\chi + \chi^\top P\dot{\chi} = \chi^\top(A_\chi^\top P + PA_\chi)\chi +   2\chi^\top P E_r q_r - 2\chi^\top P E_r w + 2\chi^\top P B_c d.
\end{aligned}
\end{equation}
By \eqref{eq:tracking_flow_lmi}, we have
\begin{equation}
\label{eq: chi_leq1}
    \chi^\top(A_\chi^\top P + PA_\chi)\chi \leq -\lambda_0\,\chi^\top P\chi = -\lambda_0 V_e.
\end{equation}
Combining \eqref{eq: dotve1} and \eqref{eq: chi_leq1}, we have
\begin{equation}
\label{eq:Ve_dot_before_young}
    \dot{V}_e \leq -\lambda_0 V_e + 2\chi^\top P E_r q_r - 2\chi^\top P E_r w + 2\chi^\top P B_c d.
\end{equation}
For any vectors $a, b$ and scalar $\mu > 0$, Young's inequality states $2a^\top b \leq \mu\|a\|^2 + \mu^{-1}\|b\|^2$. Applying this with $\mu = \lambda_0/4$ to the three cross terms, we obtain
\begin{equation}
\label{eq:young_qr}
\begin{aligned}
    2\chi^\top P E_r q_r
    &= 2(P^{1/2}\chi)^\top(P^{1/2}E_r q_r) \\
    &\leq \frac{\lambda_0}{4}\|P^{1/2}\chi\|^2
      + \frac{4}{\lambda_0}\|P^{1/2}E_r q_r\|^2 \\
    &\leq \frac{\lambda_0}{4} V_e
      + \frac{4}{\lambda_0}\bigl\|E_r^\top P E_r\bigr\|\,\|q_r\|^2,
\end{aligned}
\end{equation}
Similar to \eqref{eq:young_qr}, we have
\begin{equation}
\label{eq:young_w}
\begin{aligned}
    -2\chi^\top P E_r w 
    &\leq \frac{\lambda_0}{4} V_e  + \frac{4}{\lambda_0}\bigl\|E_r^\top P E_r\bigr\|\,\|w\|^2,\\
    2\chi^\top P B_c d
    &\leq \frac{\lambda_0}{4} V_e + \frac{4}{\lambda_0}\bigl\|B_c^\top P B_c\bigr\|\,\|d\|^2.
\end{aligned}
\end{equation}
Substituting~\eqref{eq:young_qr} and \eqref{eq:young_w} into~\eqref{eq:Ve_dot_before_young}, we have
\begin{equation}
\begin{aligned}
    \dot{V}_e &\leq -\frac{\lambda_0}{4} V_e
      + \frac{4}{\lambda_0}\bigl\|E_r^\top P E_r\bigr\|\,\|q_r\|^2 + \frac{4}{\lambda_0}\bigl\|E_r^\top P E_r\bigr\|\,\|w\|^2
     + \frac{4}{\lambda_0}\bigl\|B_c^\top P B_c\bigr\|\,\|d\|^2,
\end{aligned}
\end{equation}
establishing~\eqref{eq:nominal_iss_flow} with $\lambda = \lambda_0/4$, $\sigma_r = \sigma_w = 4\|E_r^\top P E_r\|/\lambda_0$, and $\sigma_d = 4\|B_c^\top P B_c\|/\lambda_0$ as defined in~\eqref{eq:iss_constants}.

\textit{Step 2: Non-expansion at reset instants.}
At a reset instant $t_k$, the state undergoes the jump $\chi^+ = A_{\chi,\rho}\,\chi$ in \eqref{eq:chi_nominal}. Therefore,
\begin{equation}
    V_e(\chi^+)
    = (\chi^+)^\top P\,\chi^+
    = \chi^\top A_{\chi,\rho}^\top P A_{\chi,\rho}\,\chi.
\end{equation}
From \eqref{eq:tracking_jump_lmi}, we obtain $ \chi^\top A_{\chi,\rho}^\top P A_{\chi,\rho}\,\chi \leq \chi^\top P\chi$, 
and therefore $V_e(\chi^+) \leq V_e(\chi)$, establishing~\eqref{eq:nominal_iss_jump}.
\end{proof}

\subsection{Proof of Lemma \ref{lem:decoupling}}
\label{appendix:proof_lem:decoupling}

\begin{proof}
    \textit{Step 1: Continuous-time derivative of the sliding variable.}
Let $t_k$ denote the reset instants satisfying $\xi_e(t_k) \in \mathcal{J}_{\mathrm{int}}$. During the continuous-flow intervals $t \in (t_k, t_{k+1})$, differentiating the sliding variable $s(t)$ in~\eqref{eq:Cm1} along the system dynamics~\eqref{eq:state_space_system} yields the sliding dynamics:
\begin{equation}
\label{eq:s_dot}
\begin{aligned}
    \dot{s}(t) 
    = -G \bigl(u_{sm}(t) + d(t)\bigr) - C_m w(t).
\end{aligned}
\end{equation}
Substituting $u_{sm}$ from \eqref{eq:Cm1} into~\eqref{eq:s_dot} establishes \eqref{eq:sdot_autonomous}.

\textit{Step 2: Continuity of the plant state $x(t)$ across resets.}
From \eqref{eq:output_error_def} and \eqref{eq:reset_controller}, we have
\begin{equation}
\label{eq:ur_rs_x}
u_r(t) = C_Rx_r+D_RCr_s-D_RCx(t).    
\end{equation}
From \eqref{eq:state_space_system} and \eqref{eq:ur_rs_x}, we obtain
\begin{equation}
\label{eq:dotx12}
    \dot{x}(t) = A_c\, x(t) + B C_R\, x_r(t) + \psi(t),
\end{equation}
where $A_c = A-BD_RC$ and $\psi(t) = B(u_{sm}(t) + d(t) + u_{ff}(t)) + B D_R C\, r_s(t) + w(t)$.  
From \eqref{eq:dotx12}, within the interval $[t_k,\, t_k + h]$, we have
\begin{equation}
\label{eq:x_transition_expanded}
\begin{aligned}
    x(t_k + h) &= e^{A_c h}\, x(t_k) + \int_{t_k}^{t_k+h} e^{A_c(t_k+h-\tau)}\, \psi(\tau)\,\mathrm{d}\tau  + B C_R \int_{t_k}^{t_k+h} e^{A_c(t_k+h-\tau)}\, x_r(\tau)\,\mathrm{d}\tau.
\end{aligned}
\end{equation}
From \eqref{eq:sat_def} and \eqref{eq:Cm1}, we have
\begin{equation}
\label{eq:usm_bound}
    \|u_{sm}(t)\| \leq \rho \|G^{-1}\|.
\end{equation}
By Assumptions~\ref{assum:reference_feedforward_bounds}--\ref{assum:disturbance_bound} and \eqref{eq:usm_bound}, the components of $\psi$ are bounded ($\|d\| \leq \Delta$, $\|w\| \leq \Delta_w$, $\|u_{ff}\| \leq \Delta_f$, $\|r_s\| \leq \bar{r}_0$, $\|u_{sm}\| \leq \rho \|G^{-1}\|$). Because the reset matrix $A_\rho$ in \eqref{eq:A_rho} acts as a bounded linear operator, the post-jump state $x_r(t_k^+)$ is strictly finite. Furthermore, since the continuous-time dynamics preclude finite-time escape, the system trajectories cannot grow unbounded over any finite inter-jump interval of length $h$. Therefore, there exist finite local constants $M_\psi, M_{x_r} > 0$ such that $\sup_{\tau \in (t_k,\, t_k+h]} \|\psi(\tau)\| \leq M_\psi$ and $\sup_{\tau \in (t_k,\, t_k+h]} \|x_r(\tau)\| \leq M_{x_r}$. Additionally, by the continuity of the matrix exponential over a compact interval, we define the strictly finite constant $M_e = \sup_{\tau \in [t_k,\, t_k+h]} \|e^{A_c(t_k+h-\tau)}\|$. Taking the norm of the integral terms in~\eqref{eq:x_transition_expanded} yields the strict upper bounds:
 \begin{equation}
\label{eq:integral_bounds_psi}
\begin{aligned}
    \left\|\int_{t_k}^{t_k+h} e^{A_c(t_k+h-\tau)}\, \psi(\tau)\,\mathrm{d}\tau\right\| 
    &\leq 
    h M_e M_\psi, \\ 
    \left\|B C_R \int_{t_k}^{t_k+h} e^{A_c(t_k+h-\tau)}\, x_r(\tau)\,\mathrm{d}\tau\right\| 
    &\leq h \|B C_R\|\, M_e M_{x_r}.
\end{aligned}
\end{equation}
Taking the limit as $h \to 0^+$ in~\eqref{eq:integral_bounds_psi} yields:
\begin{equation}
    \lim_{h \to 0^+} (h M_e M_\psi) = 0, \quad \lim_{h \to 0^+} (h \|B C_R\|\, M_e M_{x_r}) = 0.
\end{equation}
Thus, evaluating the limit $h \to 0^+$ on~\eqref{eq:x_transition_expanded} yields
\begin{equation}
\label{eq:x_jump_continuity}
    x(t_k^+) = \lim_{h \to 0^+} x(t_k + h) = e^0 x(t_k) + 0 + 0 = x(t_k).
\end{equation}

\textit{Step 3: Continuity of the state tracking error $e_s(t)$.}
At any reset instant $t_k$, we have $r_s(t_k^+) = r_s(t_k)$. Substituting this property and \eqref{eq:x_jump_continuity} into \eqref{eq:output_error_def}, we obtain
\begin{equation}
\label{eq:es_jump}
    e_s(t_k^+) = r_s(t_k^+) - x(t_k^+) = r_s(t_k) - x(t_k) = e_s(t_k).
\end{equation}

    \textit{Step 4: Preservation of the integral state and sliding variable.}
    From~\eqref{eq:Cm1}, the increment of the integral state $z$ over the interval $(t_k,\, t_k + h]$ is:
\begin{equation}
\label{eq:z_jump_integral}
\begin{aligned}
    z(t_k^+) - z(t_k) = \lim_{h \to 0^+} \int_{t_k}^{t_k+h} -C_m \bigl[\dot{r}_s(\tau) - A x(\tau) - B u_r(\tau)- B u_{ff}(\tau)\bigr]\mathrm{d}\tau.
\end{aligned}
\end{equation}
From \eqref{eq:reset_controller}, we have
\begin{equation}
\label{eq:ur_jump_eval}
    u_r(t_k^+) - u_r(t_k) = C_R(A_\rho - I)x_r(t_k).
\end{equation}
Because the continuous-time flow is governed by linear dynamics in \eqref{eq:reset_controller}, the reset controller state precludes finite-time escape. Consequently, at any finite reset instant $t_k < \infty$, the pre-jump state $x_r(t_k)$ is guaranteed to be finite, ensuring that \eqref{eq:ur_jump_eval} is well-defined and bounded.

Since $\dot{r}_s$, $x$, and $u_{ff}$ are bounded, and the jump in $u_r$ is finite per~\eqref{eq:ur_jump_eval}, there exist finite constants $M_{\dot{r}},\, M_x,\, M_u,\, M_{uff} > 0$ such that the integrand norm in \eqref{eq:z_jump_integral} is bounded by $M_z = \|C_m\|\bigl(M_{\dot{r}} + \|A\| M_x + \|B\| M_u + \|B\| M_{uff}\bigr)$. Consequently:
\begin{equation}
\label{eq:ztk_continue}
    \left\|z(t_k^+) - z(t_k)\right\| \leq \lim_{h \to 0^+} \int_{t_k}^{t_k+h} M_z\,\mathrm{d}\tau = \lim_{h \to 0^+} (h M_z) = 0.
\end{equation}
    This establishes $z(t_k^+) = z(t_k)$. Finally, substituting \eqref{eq:x_jump_continuity} and \eqref{eq:ztk_continue} into \eqref{eq:Cm1} yields $
    s(t_k^+) = C_m e_s(t_k^+) + z(t_k^+) = C_m e_s(t_k) + z(t_k) = s(t_k)$, completing the proof.
\end{proof}

\subsection{Proof of Lemma \ref{lem:finite_time_s}}
\label{appendix:proof_lem:finite_time_s}



\begin{proof}
\textit{Step 1: Finite-time reaching phase.}
When the state is outside the boundary layer ($|s| > \delta$), the saturation function operates in its discontinuous region. From \eqref{eq:rho_gain_condition} and \eqref{eq:sdot_autonomous}, the derivative of $|s(t)|$ evaluates to:
\begin{equation}
\label{eq:ddd_s}
\begin{aligned}
    \frac{d}{dt}|s(t)|  &= \mathrm{sgn}(s)\dot{s} = -\rho - \mathrm{sgn}(s)G d - \mathrm{sgn}(s)C_m w \\
     &\leq -\rho + |G|\Delta + \|C_m\|\Delta_w  \leq -\eta < 0.
\end{aligned}
\end{equation}
Since $\frac{d}{dt}|s(t)| \leq -\eta$ holds pointwise for all $t$ outside $\Omega_\delta$, the function $|s(t)|$ is strictly monotonically decreasing. Integrating \eqref{eq:ddd_s} from $t=0$ to the reaching time $t^*$, where $|s(t^*)| = \delta$, yields:
\begin{equation}
\label{eq:ddd_s_eta}
 \vert{}s(t^*)\vert{} - \vert{}s(0)\vert{}  =   \delta - |s(0)| \leq -\eta t^*.
\end{equation}
Rearranging \eqref{eq:ddd_s_eta} gives the reaching time bound $t^* \leq (|s(0)| - \delta)/\eta$.

\textit{Step 2: Positive invariance of the boundary layer.}
To rigorously establish that the set $\Omega_\delta$ is positively invariant, we evaluate the vector field exactly at the boundary $|s| = \delta$. Considering the Lyapunov candidate $V_s(s) = \frac{1}{2}s^2$, its time derivative at the boundary is:
\begin{equation}
    \dot{V}_s\big|_{|s|=\delta} = s\dot{s} = -\rho \frac{s^2}{\delta} - s(Gd + C_mw) = -\rho\delta - s(Gd + C_mw).
\end{equation}
Bounding the external perturbation terms yields:
\begin{equation}
    \dot{V}_s\big|_{|s|=\delta} \leq -\rho\delta + |s|\bigl(|G|\Delta + \|C_m\|\Delta_w\bigr) = -\delta\bigl(\rho - |G|\Delta - \|C_m\|\Delta_w\bigr).
\end{equation}
Substituting the gain condition $\eta = \rho - |G|\Delta - \|C_m\|\Delta_w > 0$, we obtain:
\begin{equation}
    \dot{V}_s\big|_{|s|=\delta} \leq -\delta\eta < 0.
\end{equation}
Because the derivative $\dot{V}_s$ is strictly negative exactly at the boundary $|s| = \delta$, the state trajectories are forced strictly inward. Consequently, any trajectory that enters $\Omega_\delta$ can never subsequently escape, confirming that $\Omega_\delta$ is a positively invariant set. This completes the proof.
\end{proof}

\subsection{Proof of Lemma \ref{lem:equivalent_control_error}}
\label{appendix:proof_lem equivalent_control_error}
\begin{proof}
For all $t \geq t^*$, the sliding variable satisfies $|s(t)| \leq \delta$. The sliding mode control law is given by:
\begin{equation}
\label{eq:usm_linear}
    u_{sm}(t) = G^{-1}\rho\left( {s(t)}/{\delta}\right).
\end{equation}
Define $v(t) = u_{sm}(t) + d(t)$. From \eqref{eq:usm_linear}, we obtain
\begin{equation}
\label{eq:v_dot_initial}
    \dot{v}(t) = (G^{-1} {\rho}/{\delta})\dot{s}(t) + \dot{d}(t).
\end{equation}
Combining $v(t) = u_{sm}(t) + d(t)$ and \eqref{eq:sdot_autonomous}, we have $\dot{s}(t) = -G(u_{sm}(t) + d(t)) - C_m w(t)$. Substituting this expression into \eqref{eq:v_dot_initial} yields
\begin{equation}
\label{eq:v_dot_expanded}
\begin{aligned}
    \dot{v}(t) &= -\frac{\rho}{\delta}v(t) - G^{-1}\frac{\rho}{\delta}C_m w(t) + \dot{d}(t),
\end{aligned}
\end{equation}
which establishes \eqref{eq:v_differential}. 
Defining the exogenous input $\psi(t) = \dot{d}(t) - G^{-1}\frac{\rho}{\delta}C_m w(t)$, equation \eqref{eq:v_dot_expanded} becomes $\dot{v}(t) = -(\rho/\delta)v(t) + \psi(t)$. Applying the variation of constants formula to this linear system yields the upper bound:
\begin{equation}
\label{eq:|vt|}
    |v(t)| \leq e^{-\frac{\rho}{\delta}(t-t^*)}|v(t^*)| + \frac{\delta}{\rho}  \sup_{\tau \geq t^*} |\psi(\tau)|.
\end{equation}
Following Assumptions \ref{assum:perturbation_noise_bounds} and \ref{assum:disturbance_bound}, we have $\|w(t)\| \leq \Delta_w$ and $|\dot{d}(t)| \leq \Delta_1$. Thus, we obtain
\begin{equation}
    \sup_{\tau \geq t^*} |\psi(\tau)| \leq \Delta_1 + |G|^{-1}\frac{\rho}{\delta}\|C_m\|\Delta_w.
\end{equation}
As $t \to \infty$, the transient exponential term in \eqref{eq:|vt|} decays to zero, yielding the steady-state ultimate bound:
\begin{equation}
    \limsup_{t \to \infty} |v(t)| \leq  \frac{\delta \Delta_1}{\rho} + \frac{\|C_m\|\Delta_w}{|G|},
\end{equation}
which establishes \eqref{eq:v_ultimate_bound} and completes the proof.
\end{proof}

\subsection{Proof of Theorem \ref{thm:continuous_flow_stability}}
\label{appendix:proof_lemma_flow}

\begin{proof}
\textit{Step 1: Lie derivative of the Lyapunov function for the $\chi$-subsystem.} 
During continuous flows, the forced dynamics of the tracking-error subsystem are governed by
\begin{equation}
\label{eq:chi_dynamics_combined}
    \dot{\chi} = A_\chi \chi + E_r(q_r - w) + B_c v,
\end{equation}
where $v = u_{sm} + d$.
Since the RC-ISMC architecture~\eqref{eq:hybrid_interconnected} shares the system matrices $A_\chi$, $E_r$, and $B_c$ with the standalone reset system~\eqref{eq:chi_nominal}, the Lyapunov matrix $P$ and scalar $\lambda_0$ satisfying Assumption~\ref{assum:iss_reset_input} apply directly. Differentiating $V_e(\chi) = \chi^\top P\chi$ along the trajectories of~\eqref{eq:chi_dynamics_combined} and applying Young's inequality—as detailed in the proof of Corollary~\ref{cor:quadratic_iss_reset_sufficient} (Appendix~\ref{appendix:proof_quadratic_iss_reset})—yields:
\begin{equation}
\label{eq:Ve_dot_combined}
\begin{aligned}
    \dot{V}_e &\leq -\frac{\lambda_0}{4} V_e + \frac{4}{\lambda_0}\bigl\|E_r^\top P E_r\bigr\|\|q_r\|^2 + \frac{4}{\lambda_0}\bigl\|E_r^\top P E_r\bigr\|\|w\|^2 + \frac{4}{\lambda_0}\bigl\|B_c^\top P B_c\bigr\||v(t)|^2.
\end{aligned}
\end{equation}
Lemma~\ref{lem:equivalent_control_error} establishes $\limsup_{t\to\infty} |v(t)| \leq \bar{v}_\delta$. Thus, for any $\epsilon > 0$, there exists a finite time $T_\epsilon \geq t^*$ such that $|v(t)| \leq \bar{v}_\delta + \epsilon$ for all $t \geq T_\epsilon$. Substituting this bound, $\|q_r\|^2 \leq \bar{q}_r^2$, and $\|w\|^2 \leq \Delta_w^2$ into \eqref{eq:Ve_dot_combined} gives:
\begin{equation}
\label{eq:v_chi}
    \dot{V}_e(\chi) \leq - ({\lambda_0}/{4}) V_e(\chi) + c_q + c_w + c_{\mathrm{coup},\epsilon}, \quad \forall t \geq T_\epsilon.
\end{equation}

\textit{Step 2: Lie derivative of the Lyapunov function for the sliding variable dynamics.}
For $t \geq t^*$, we have $|s(t)| \leq \delta$ and $u_{sm} = G^{-1}\rho(s/\delta)$. Let $V_s(s) = \frac{1}{2}s^2$. From \eqref{eq:sdot_autonomous}, the time derivative is given by
\begin{equation}
\label{eq: dot_Vs}
    \dot{V}_s = s\dot{s} = -\frac{\rho}{\delta}s^2 - sGd - sC_mw = -\frac{2\rho}{\delta}V_s - sGd - sC_mw.
\end{equation}
Applying Young's inequality to the perturbation terms with $\mu = \rho/(2\delta)$ yields $-sGd \leq \frac{\rho}{4\delta}s^2 + \frac{\delta}{\rho}G^2\Delta^2$ and $-sC_mw \leq \frac{\rho}{4\delta}s^2 + \frac{\delta}{\rho}\|C_m\|^2\Delta_w^2$. Substituting these bounds into \eqref{eq: dot_Vs} gives:
\begin{equation}
\label{eq:vs_dot_bound}
\begin{aligned}
    \dot{V}_s &\leq -\frac{2\rho}{\delta}V_s + \left( \frac{\rho}{2\delta}V_s + \frac{\delta}{\rho}G^2\Delta^2 \right) + \left( \frac{\rho}{2\delta}V_s + \frac{\delta}{\rho}\|C_m\|^2\Delta_w^2 \right) \\
              &= -\frac{\rho}{\delta}V_s + c_s.
\end{aligned}
\end{equation}

\textit{Step 3: Composite flow dissipation.}
Following \cite[Lemma 4.7]{khalil2002nonlinear}, we define the Lyapunov function $V(\xi_e) = V_e(\chi) + V_s(s)$ for the RC-ISMC system. For $t \geq T_\epsilon$, summing \eqref{eq:v_chi} and \eqref{eq:vs_dot_bound} provides the derivative:
\begin{equation}
    \dot{V}(\xi_e) \leq -\frac{\lambda_0}{4}V_e(\chi) - \frac{\rho}{\delta}V_s(s) + c_q + c_w + c_{\mathrm{coup},\epsilon} + c_s.
\end{equation}
Defining the decay rate $\alpha = \min(\lambda_0/4,\, \rho/\delta)$ and factoring out $-\alpha$, we obtain
\begin{equation}
\label{eq:V_composite_final}
    \dot{V}(\xi_e) \leq -\alpha \bigl(V_e(\chi) + V_s(s)\bigr) + c_{\mathrm{total},\epsilon} = -\alpha V(\xi_e) + c_{\mathrm{total},\epsilon},
\end{equation}
where $c_{\mathrm{total},\epsilon} = c_q + c_w + c_{\mathrm{coup},\epsilon} + c_s$, which establishes \eqref{eq:global_flow_bound}.

\textit{Step 4: Non-increase of the Lyapunov function at resets.}
Evaluating the Lyapunov function at the reset instant $t_k^+$ using \eqref{eq:jump_components}, \eqref{eq:interconnected_jump_map}, and then applying the condition in \eqref{eq:tracking_jump_lmi}, gives:
\begin{equation}
\label{eq:jump_nonexpansive_total}
\begin{aligned}
    V(\xi_e(t_k^+)) &= \chi(t_k)^\top \bigl( A_{\chi,\rho}^\top P A_{\chi,\rho} \bigr) \chi(t_k) + V_s(s(t_k)) \\
                    &\leq \chi(t_k)^\top P \chi(t_k) + V_s(s(t_k)) = V(\xi_e(t_k)).
\end{aligned}
\end{equation}

\textit{Step 5: Ultimate boundedness conclusion.}
Evaluating \eqref{eq:V_composite_final} and \eqref{eq:jump_nonexpansive_total} via the comparison lemma \cite{khalil2002nonlinear} on $[T_\epsilon, \infty)$ provides the asymptotic bound $\limsup_{t \to \infty} V(\xi_e(t)) \leq c_{\mathrm{total},\epsilon}/\alpha$. Applying Lemma~\ref{lem:equivalent_control_error} and taking the limit as $\epsilon \to 0$ yields
\begin{equation}
    \limsup_{t \to \infty} V(\xi_e(t)) \leq {c_{\mathrm{total}}}/{\alpha}.
\end{equation}
Because the Lyapunov function satisfies the lower bound $V(\xi_e) \geq \lambda^\ast \|\xi_e\|^2$, isolating $\|\xi_e\|$ directly derives the state limit in~\eqref{eq:ultimate_bound}. This bounded response to external perturbations establishes that the system is ISS and UUB. This completes the proof.
\end{proof}

\end{document}